\newcommand{\leqnomode}{\tagsleft@true\let\veqno\@@leqno}
\newcommand{\reqnomode}{\tagsleft@false\let\veqno\@@eqno}
\newcounter{algsubstate}
\newenvironment{algsubstates}
  {\setcounter{algsubstate}{0}%
   \renewcommand{\State}{%
     \refstepcounter{algsubstate}%
     \Statex {\hspace{\algorithmicindent}\footnotesize\alph{algsubstate}:}\space}}
  {}
\theoremstyle{plain}
\newtheorem{theorem}{Theorem}
\newtheorem{lemma}[theorem]{Lemma}
\newtheorem{corollary}[theorem]{Corollary}
\newtheorem{claim}[theorem]{Claim}
\theoremstyle{definition}
\newcommand{\eps}{\varepsilon}
\newcommand{\TP}{\mathsf{TP}}
\newcommand{\TR}{\mathsf{T}}
\newcommand{\Eps}{\mathcal{E}}
\newcommand{\varT}{\mathcal{T}}
\newcommand{\varL}{\mathcal{L}}
\DeclareMathOperator\supp{supp}
\DeclareMathOperator\spa{span}
\DeclareMathOperator\comp{comp}
\DeclarePairedDelimiter{\floor}{\lfloor}{\rfloor}
\def\final{0}  % set this to 1 to get a comment-free version
\newcommand{\knote}[1]{{\color{red}[{\tiny Krist\'of: \bf #1}]\marginpar{\color{red}*}}}
\newcommand{\rnote}[1]{{\color{blue}[{\tiny Roland: \bf #1}]\marginpar{\color{Blue}*}}}
\newcommand{\mnote}[1]{{\color{red}[{\tiny Matthias: \bf #1}]\marginpar{\color{red}*}}}
\newcommand{\anote}[1]{{\color{red}[{\tiny Andre: \bf #1}]\marginpar{\color{red}*}}}
\newcommand{\todo}[1]{{\color{red}[{\tiny TODO: \bf #1}]\marginpar{\color{red}*}}}
\newcommand{\knote}[1]{}
\newcommand{\rnote}[1]{}
\newcommand{\mnote}[1]{}
\newcommand{\anote}[1]{}
\newcommand{\todo}[1]{}
\title{Degree-Bounded Generalized Polymatroids and\\ Approximating the Metric Many-Visits TSP\thanks{Supported by DAAD with funds of the Bundesministerium f{\"u}r Bildung und Forschung (BMBF) and by DFG project MN 59/4-1.}}
\author{Krist{\'o}f B{\'e}rczi\thanks{MTA-ELTE Egerv\'ary Research Group, Department of Operations Research, E{\"o}tv{\"o}s Lor{\'a}nd University, Hungary. Email: \texttt{berkri@cs.elte.hu}.} 
\and
Andr{\'e} Berger\thanks{Department of Quantitative Economics, Maastricht University, The Netherlands. Email: \texttt{a.berger@maastrichtuniversity.nl}.}
\and
Matthias Mnich\thanks{Universit{\"a}t Bonn \emph{and} Technische Universit{\"a}t Hamburg, Germany. Email: \texttt{matthias.mnich@tuhh.de}.}
\and
Roland Vincze\thanks{Department of Quantitative Economics, Maastricht University, The Netherlands \emph{and} Technische Universit{\"a}t Hamburg, Germany. Email: \texttt{roland.vincze@tuhh.de}.}
}
\begin{document}
\date{}
\maketitle

\begin{abstract}
  In the {\sc Bounded Degree Matroid Basis Problem}, we are given a matroid and a hypergraph on the same ground set, together with costs for the elements of that set as well as lower and upper bounds $f(\eps)$ and $g(\eps)$ for each hyperedge~$\eps$.
  The objective is to find a minimum-cost basis $B$ such that $f(\eps) \leq |B \cap \eps| \leq g(\eps)$ for each hyperedge $\eps$.
  Kir{\'a}ly et al. (Combinatorica, 2012) provided an algorithm that finds a basis of cost at most the optimum value which violates the lower and upper bounds by at most $2 \Delta-1$, where $\Delta$ is the maximum degree of the hypergraph.
  When only lower or only upper bounds are present for each hyperedge, this additive error is decreased to $\Delta-1$.
  
  We consider an extension of the matroid basis problem to generalized polymatroids, or g-po\-ly\-mat\-ro\-ids, and additionally allow element multiplicities.
  The {\sc Bounded Degree g-po\-ly\-mat\-ro\-id Element Problem with Multiplicities} takes as input a g-polymatroid $Q(p,b)$ instead of a matroid, and besides the lower and upper bounds, each hyperedge~$\eps$ has element multiplicities $m_\eps$.
  %In case of a single bound present, we call the problem {\sc Lower/Upper Bounded Degree Matroid Basis Problem with Multiplicities}.
  Building on the approach of Kir{\'a}ly et al., we provide an algorithm for finding a solution of cost at most the optimum value, having the same additive approximation guarantee.
  
  As an application, we develop a $1.5$-approximation for the metric {\sc Many-Visits TSP}, where the goal is to find a minimum-cost tour that visits each city $v$ a positive $r(v)$ number of times.
  Our approach combines our algorithm for the {\sc Bounded Degree g-polymatroid Element Problem with Multiplicities} with the principle of Christofides' algorithm from 1976 for the (single-visit) metric TSP, whose approximation guarantee it matches.
  \bigskip

  \noindent \textbf{Keywords:} Generalized polymatroids, degree constraints, traveling salesman problem.
\end{abstract}

\raisebox{-60ex}[0pt][0pt]{\hspace{78ex}\includegraphics[scale=0.45]{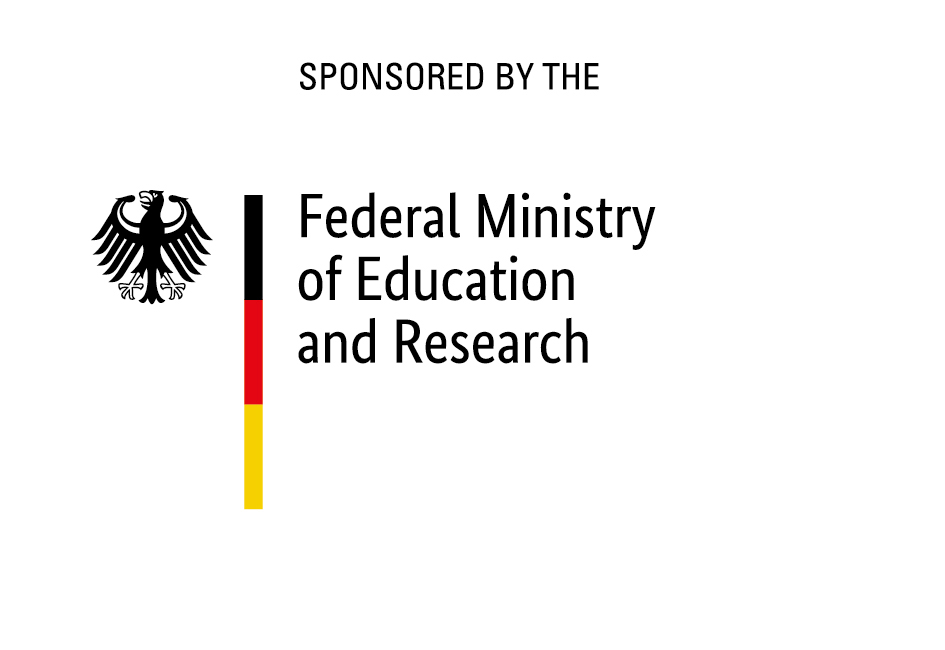}}

\thispagestyle{empty}

\clearpage
\pagebreak

\setcounter{page}{1}

\section{Introduction}
\label{sec:introduction}
In this paper we consider polymatroidal optimization problems with degree constraints.
An illustrious example is the {\sc Minimum Bounded Degree Spanning Tree problem}, where the goal is to find a minimum cost spanning tree in a graph with lower and upper bounds on the degree at each vertex.
Checking feasibility of a degree-bounded spanning tree contains the $\mathsf{NP}$-hard Hamiltonian path problem; therefore, efficiently finding spanning trees that only slightly violate the degree constraints, is of interest.
Several algorithms were given that were balancing between the cost of the spanning tree and the violation of the degree bounds~\cite{ChaudhuriEtAl2009,ChaudhuriEtAl2009a,FurerRaghavachari1994,KonemannRavi2003,KonemannRavi2002}.
Goemans~\cite{Goemans2006} gave a polynomial-time algorithm that finds a spanning tree of cost at most the optimum value that violates each degree bound by at most $2$.
Singh and Lau~\cite{SinghLau2007} improved the additive approximation guarantee to~$1$ by extending the iterative rounding method of Jain~\cite{Jain2001} with a relaxation step.
Zenklusen~\cite{Zenklusen2012} considered an extension of the problem where for every vertex $v$, the edges adjacent to $v$ have to be independent in a given matroid.
  
Motivated by a problem on binary matroids posed by Frienze, a matroidal generalization called the {\sc Minimum Bounded Degree Matroid Basis Problem} was introduced by Kir{\'a}ly, Lau and Singh~\cite{KiralyEtAl2012} in 2012.
The problem takes as input a matroid $M=(S,r)$, a cost function $c:S \rightarrow \mathbb{R}$, a hypergraph $H=(S, \mathcal{E})$ and lower and upper bounds $f,g:\mathcal{E}\rightarrow\mathbb{Z}_{\geq 0}$; the objective is to find a minimum-cost basis $B$ of $M$ such that $f(\eps) \leq |B \cap \eps| \leq g(\eps)$ for each $\eps \in \mathcal{E}$.
For this problem, the authors developed an approximation algorithm that is based on the iterative relaxation method and a clever token-counting argument of Chaudhuri et al.~\cite{ChaudhuriEtAl2009} and Singh and Lau~\cite{SinghLau2007}.
Let us denote the maximum degree of the hypergraph $H$ by $\Delta$.
When both lower bounds and upper bounds are present, their algorithm returns a basis $B$ of cost at most the optimum value such that $f(\eps) - 2\Delta + 1 \leq |B \cap \eps| \leq g(\eps) + 2\Delta - 1$ holds for each $\eps \in \mathcal{E}$.
Based on a technique of Bansal et al.~\cite{BansalEtAl2009}, they showed that the additive error can be improved when only lower bounds (or only upper bounds) are present, thus finding a basis of cost $B$ at most the optimum value such that $|B\cap \eps|\leq g(\eps)+\Delta-1$ (respectively, $f(\eps)-\Delta+1\leq |B\cap \eps|$) for each $\eps\in\mathcal{E}$.
Bansal et al.~\cite{BansalEtAl2013} considered extensions of the {\sc Minimum Bounded Degree Matroid Basis Problem} to contra-polymatroid intersection and to crossing lattice polyhedra.
In all of these cases, the solution for the problem is a $0{-}1$ vector defined on the ground set.

\subsection*{Our results}
In this paper we consider a different generalization of the {\sc Bounded Degree Matroid Basis Problem}.
The generalization deals with general polymatroids (or g-polymatroids) instead of matroids, and additionally allows multiplicities of the hyperedges.
Formally, the problem takes as input a g-polymatroid $Q(p,b)=(S,p,b)$ with a cost function $c:S \rightarrow \mathbb{R}$, and a hypergraph $H=(S, \mathcal{E})$ on the same ground set with lower and upper bounds $f, g:\mathcal{E}\rightarrow\mathbb{Z}_{\geq 0}$ and multiplicity vectors $m_\eps:S\rightarrow\mathbb{Z}_{>\geq0}$ for $\eps\in\Eps$ satisfying $m_\eps(s)=0$ for $s\in S-\eps$.
The objective is to find a minimum-cost element $x$ of~$Q(p,b)$ such that $f(\eps) \leq \sum_{s\in \eps}m_\eps(s)x(s) \leq g(\eps)$ for each $\eps \in \mathcal{E}$.
We call this problem the {\sc Bounded Degree g-polymatroid Element Problem with Multiplicities}. 
  
Our first main algorithmic result is the following:
\begin{theorem}
\label{thm:matroid1}
  There is a polynomial-time algorithm for the {\sc Bounded Degree g-polymatroid Element Problem with Multiplicities} which returns an element $x$ of $Q(p,b)$ of cost at most the optimum value such that $f(\eps)- 2\Delta+1 \leq \sum_{s\in \eps} m_\eps(s) x(s) \leq g(\eps)+2\Delta-1$ for each $\eps\in\Eps$, where $\Delta=\max_{s\in S}\left\{\sum_{\eps\in\Eps:s\in \eps} m_\eps(s)\right\}$.
\end{theorem}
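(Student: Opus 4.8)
The plan is to adapt the iterative relaxation method of Kir\'aly, Lau and Singh to the g-polymatroid setting. First I would set up the LP relaxation $(P)$: minimise $c^\top x$ over all $x\in\mathbb{R}^S$ subject to the g-polymatroid constraints $p(Z)\le x(Z)\le b(Z)$ for every $Z\subseteq S$ and the degree constraints $f(\eps)\le\sum_{s\in\eps}m_\eps(s)x(s)\le g(\eps)$ for every $\eps\in\Eps$. Since submodular function minimisation provides a separation oracle for the g-polymatroid part while there are only $|\Eps|$ explicit degree constraints, an optimal \emph{vertex} solution of $(P)$ is computable in polynomial time, and its value is a lower bound on $\OPT$. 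The algorithm maintains a partial integral solution $y$ (initially $\mathbf{0}$) together with residual data --- a g-polymatroid $Q'(p',b')$ on a shrinking ground set $S'$ and residual bounds $f',g'$ --- and repeatedly re-solves $(P)$ on the current data, applying one of two relaxation rules at each vertex; the final fully defined $y$ lies in $Q(p,b)$ by the usual contraction identities.

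Next I would pin down the structure of a vertex $x^*$ of $(P)$. The family of $b$-tight sets $\{Z:x^*(Z)=b'(Z)\}$ is closed under union and intersection by submodularity, and likewise the $p$-tight sets by supermodularity; moreover, if a $b$-tight set $X$ crosses a $p$-tight set $Y$, then combining the paramodular inequality $b'(X)-p'(Y)\ge b'(X\setminus Y)-p'(Y\setminus X)$ with the identity $x^*(X)-x^*(Y)=x^*(X\setminus Y)-x^*(Y\setminus X)$ and the bounds $x^*\le b'$, $x^*\ge p'$ forces equality throughout, so $X\setminus Y$ is $b$-tight and $Y\setminus X$ is $p$-tight. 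Hence $x^*$ is the unique solution of a linear system consisting of a \emph{laminar} family $\mathcal{L}$ of tight g-polymatroid sets together with a linearly independent set $\mathcal{T}$ of tight degree constraints, with $|\mathcal{L}|+|\mathcal{T}|=|S'|$.

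The two relaxation rules are: \textbf{(a)} if $x^*(s)\in\mathbb{Z}$ for some $s\in S'$ (in particular if $x^*(s)=0$), fix it --- add $x^*(s)$ to $y(s)$, contract $s$ out of $Q'$, and replace $f'(\eps),g'(\eps)$ by $f'(\eps)-m_\eps(s)x^*(s),\,g'(\eps)-m_\eps(s)x^*(s)$ for every $\eps\ni s$; \textbf{(b)} if $\sum_{s\in\eps\cap S'}m_\eps(s)\le 2\Delta-1$ for some $\eps\in\Eps$ still present, delete both bounds of $\eps$ and, for the rest of the run, freeze each surviving $s\in\eps\cap S'$ into its integer window $[\lfloor x^*(s)\rfloor,\lceil x^*(s)\rceil]$ (a box constraint that keeps $x^*$ feasible and preserves the g-polymatroid structure). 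Rule (a) preserves $c^\top y+(\text{LP optimum on the residual data})\le\OPT$ and does not increase $\Delta$; rule (b) is safe because at deletion time the current solution still satisfies the residual bounds of $\eps$ and afterwards only the at most $2\Delta-1$ multiplicity-weighted units on $\eps\cap S'$ are unfixed, each eventually rounded within its window, so the final value of $\sum_{s\in\eps}m_\eps(s)y(s)$ is within $2\Delta-1$ of a value obeying the original bounds, giving exactly the claimed additive error. It therefore remains to prove that one of the two rules always applies.

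The crux is a token-counting argument: assuming neither rule applies, I would derive a contradiction. If rule (a) fails then every coordinate of $x^*$ is non-integral, so --- as $p',b'$ are integer-valued and no window constraint is tight --- every member of $\mathcal{L}$ contains at least two elements; if rule (b) fails then $\sum_{s\in\eps}m_\eps(s)\ge 2\Delta$ for every $\eps\in\mathcal{T}$. Assign $2$ tokens to each element of $S'$, for a total of $2|S'|=2(|\mathcal{L}|+|\mathcal{T}|)$, and redistribute them so that (i) by the standard laminar uncrossing each set of $\mathcal{L}$ collects at least $2$ tokens, (ii) each $\eps\in\mathcal{T}$ collects at least $2$ tokens --- here one plays $\sum_{s\in\eps}m_\eps(s)\ge 2\Delta$ against $\Delta=\max_{s}\sum_{\eps\ni s}m_\eps(s)$ to ensure the share routed to the degree constraints through each element is large enough --- and (iii) at least one token is never assigned, contradicting $2|S'|=2(|\mathcal{L}|+|\mathcal{T}|)$. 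I expect the main obstacle to be precisely this accounting: executing the uncrossing of the $p$-tight against the $b$-tight g-polymatroid constraints in tandem with a token scheme that must simultaneously feed the laminar family $\mathcal{L}$ and the non-laminar degree hyperedges, where the multiplicities $m_\eps(s)$ force each element to split its tokens \emph{fractionally} across the hyperedges containing it rather than by the simple integral rule available in the $0$--$1$ matroid case.
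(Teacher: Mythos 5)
Your proposal follows essentially the same route as the paper: the same LP relaxation separated via submodular function minimization, the same two relaxation rules (fixing integral coordinates by contraction, and dropping a hyperedge once the surviving multiplicity mass $\sum_{s\in\eps}m_\eps(s)$ drops to at most $2\Delta-1$), the same uncrossing of $p$-tight against $b$-tight sets into a spanning laminar family, and the same multiplicity-weighted token count (yours is the paper's $2\Delta$-tokens-per-element argument rescaled by $1/\Delta$). The one point where you diverge is only in mechanism, not in substance: where you freeze the surviving elements of a dropped hyperedge into unit-width integer windows $[\lfloor x^*(s)\rfloor,\lceil x^*(s)\rceil]$, the paper instead adds $\lfloor x\rfloor$ to the partial solution, contracts it out of the g-polymatroid, and intersects the residual polyhedron with the unit cube $[0,1]^S$ after the first iteration --- both are box intersections with the g-polymatroid that cap the future movement on any dropped hyperedge at $2\Delta-1$, which is exactly the issue (coordinates exceeding $1$) that blocks a verbatim transfer of the Kir\'aly--Lau--Singh matroid argument.
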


Theorem~\ref{thm:matroid1} extends the result of Kir{\'a}ly et al.~\cite{KiralyEtAl2012} from matroids to g-poly\-matroids.
It turns out that, when upper bounds are present, there is a significant difference when g-polymatroids are considered instead of matroids.
Adapting the algorithm of Kir{\'a}ly et al. is not immediate, as a crucial step of their approach is to relax the problem by deleting a constraint corresponding to a hyperedge~$\eps$ with small $g(\eps)$ value.
This step is feasible when the solution is a $0$-$1$ vector, as in those cases the violation on $\eps$ is upper bounded by the size of the hyperedge.
This does not hold for g-polymatroids (or even for polymatroids), where an integral element might have coordinates larger than 1.
However, we show that after the first round of our algorithm, the problem can be restricted to the unit cube and so upper bounds remain tractable. 

When only lower bounds (or only upper bounds) are present, we call the problem {\sc Lower (Upper) Bounded Degree g-polymatroid Element Problem with Multiplicities}.
In this case, we show a similar result with an improved additive error:
\begin{theorem}
\label{thm:matroid2}
  There is an algorithm for the {\sc Lower Bounded Degree g-polymatroid Element Problem with Multiplicities} that runs in polynomial time and returns an element $x$ of $Q(p,b)$ of cost at most the optimum value such that $f(\eps)- \Delta+1 \leq \sum_{s\in \eps} m_\eps(s) x(s)$ for each $\eps\in\Eps$.
  An analogous result holds for the {\sc Upper Bounded Degree g-polymatroid Element Problem}, where $\sum_{s\in \eps} m_\eps(s) x(s) \leq g(\eps) + \Delta - 1$.
\end{theorem}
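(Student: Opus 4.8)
My plan is to prove Theorem~\ref{thm:matroid2} by \emph{iterative relaxation}, reusing the framework behind Theorem~\ref{thm:matroid1} but replacing its hyperedge-deletion rule by the sharper one that underlies the technique of Bansal et al.~\cite{BansalEtAl2009}, as exploited by Kir\'aly et al.~\cite{KiralyEtAl2012} in the matroid case. I describe the \textsc{Lower Bounded} version; the \textsc{Upper Bounded} version follows symmetrically (after the reduction below one may also simply apply the order-reversing map $x\mapsto\mathbf{1}-x$, which sends a g-polymatroid lying in the unit cube to another g-polymatroid and turns upper bounds into lower bounds).

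\smallskip
\noindent\textit{Reduction to the unit cube.} As in the proof of Theorem~\ref{thm:matroid1}, first solve the LP relaxation $\min\{\,c(x):x\in Q(p,b),\ \sum_{s\in\eps}m_\eps(s)x(s)\ge f(\eps)\ \forall\eps\in\Eps\,\}$, take a vertex optimum $x^{0}$, and permanently commit to the integer vector $\lfloor x^{0}\rfloor$: pass to the translated g-polymatroid $Q(p-\lfloor x^{0}\rfloor,\,b-\lfloor x^{0}\rfloor)$ (g-polymatroids are closed under integer translation) and to the residual bounds $f(\eps)-\sum_{s\in\eps}m_\eps(s)\lfloor x^{0}(s)\rfloor$. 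This does not increase the objective, since $x^{0}-\lfloor x^{0}\rfloor$ is feasible for the residual LP, so the committed cost $c(\lfloor x^{0}\rfloor)$ plus the residual LP optimum is at most $c(\lfloor x^{0}\rfloor)+c(x^{0}-\lfloor x^{0}\rfloor)=c(x^{0})\le\OPT$; and afterwards the feasible region is contained in $[0,1]^{S}$ and stays there under all later operations, which is exactly what is needed for the violation estimate below.

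\smallskip
\noindent\textit{Iterative relaxation.} Maintain a current ground set $S'$, an active hyperedge set $\mathcal{A}$, current residual bounds, and the invariant that the committed cost plus the current LP optimum is at most $\OPT$. At each step take a vertex optimum $x^{*}$ of $\min\{\,c(x):x\in Q(p,b)\cap[0,1]^{S'},\ \sum_{s\in\eps}m_\eps(s)x(s)\ge f(\eps)\ \forall\eps\in\mathcal{A}\,\}$ and apply one of:
\begin{enumerate}[(i)]
\item if $x^{*}(s)\in\{0,1\}$ for some $s\in S'$, make it permanent --- delete $s$ from $S'$ if $x^{*}(s)=0$, and otherwise contract $s$ in the g-polymatroid while decreasing $f(\eps)$ by $m_\eps(s)$ for every $\eps\ni s$;
\item otherwise, if some $\eps\in\mathcal{A}$ has $\sum_{s\in\eps\cap S'}m_\eps(s)\le\Delta$, remove $\eps$ from $\mathcal{A}$.
\end{enumerate}
The removal in (ii) is harmless: at such a step every coordinate of $x^{*}$ is strictly fractional (else (i) applies), so the part of the $\eps$-constraint carried by $S'$ satisfies $\sum_{s\in\eps\cap S'}m_\eps(s)x^{*}(s)<\sum_{s\in\eps\cap S'}m_\eps(s)\le\Delta$, while the part carried by the already-committed coordinates $S\setminus S'$ is an integer which, together with the former, is at least $f(\eps)$; hence that committed part is an integer exceeding $f(\eps)-\Delta$, i.e.\ at least $f(\eps)-\Delta+1$. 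Since committed coordinates never change and all remaining coordinates stay nonnegative, the final integral $x$ (and hence the element of the original $Q(p,b)$ that it corresponds to) satisfies $\sum_{s\in\eps}m_\eps(s)x(s)\ge f(\eps)-\Delta+1$, while hyperedges never removed end up satisfied outright.

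\smallskip
\noindent\textit{Progress lemma (the main obstacle).} It remains to show that (i) or (ii) always applies, so that each step strictly decreases $|S'|+|\mathcal{A}|$ and the algorithm terminates; by integrality of g-polymatroid polyhedra the terminal $x$ is then an element of $Q(p,b)$ of cost at most $\OPT$, completing the proof. Suppose instead that $x^{*}$ is strictly fractional on all of $S'$ and every $\eps\in\mathcal{A}$ has $\sum_{s\in\eps\cap S'}m_\eps(s)\ge\Delta+1$. Then $x^{*}$ is pinned down by $|S'|$ linearly independent tight constraints, each of which is a tight set-constraint of $Q(p,b)$ or a tight hyperedge-constraint (no box constraint is tight). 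Here one must (a) \emph{uncross}: the $b$-tight sets and the $p$-tight sets each form a lattice, and the standard g-polymatroid uncrossing (the crossing-lattice-polyhedron arguments of Bansal et al.~\cite{BansalEtAl2013} together with Frank's theory of g-polymatroids) produces a laminar family $\mathcal{L}$ of tight g-polymatroid sets and a subfamily $\mathcal{T}\subseteq\mathcal{A}$ of tight hyperedges with $|\mathcal{L}|+|\mathcal{T}|=|S'|$ spanning the same space; and (b) run a \emph{token argument} in the style of Chaudhuri et al.~\cite{ChaudhuriEtAl2009}, Singh and Lau~\cite{SinghLau2007} and Kir\'aly et al.~\cite{KiralyEtAl2012}: give each $s\in S'$ a token budget just large enough to hand $m_\eps(s)$ tokens to every $\eps\ni s$ (total at most $\Delta$, by the definition of $\Delta$, with a reserve for the laminar family), charge the laminar family bottom-up as usual, and invoke linear independence to conclude that some surviving tight hyperedge --- now demanding strictly more than $\Delta$ tokens --- cannot be paid without double-counting, a contradiction. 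Matching the per-element budget exactly with $\Delta=\max_{s}\sum_{\eps\ni s}m_\eps(s)$ while simultaneously feeding the laminar g-polymatroid family is the delicate point, and is where the g-polymatroid generality and the multiplicities make the argument genuinely heavier than in the matroid case; the remaining bookkeeping --- the translation step, maintenance of the invariants, and integrality of $Q(p,b)$-polyhedra --- is routine.
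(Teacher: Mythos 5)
Your overall architecture matches the paper's: same iterative-relaxation algorithm as for Theorem~\ref{thm:matroid1}, reduction to the unit cube after the first LP solve, a modified hyperedge-removal rule, uncrossing of tight sets into an independent laminar family $\varL$ with $|\Eps'|+|\varL|=|S|$, and a counting argument to force progress. Your "harmlessness" analysis of the removal step is also correct as far as it goes: when $\sum_{s\in\eps\cap S'}m_\eps(s)\le\Delta$ and all surviving coordinates are strictly fractional, the committed integer part of the $\eps$-constraint is at least $f(\eps)-\Delta+1$, which is exactly the claimed guarantee.

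However, there is a genuine gap exactly at the point you flag as "delicate": the progress lemma is not proven, and the route you sketch does not close. Two intertwined problems. First, your removal rule for the lower-bounded case ($m_\eps(\eps)\le\Delta$) is not the one the counting argument needs. The paper removes $\eps$ when the \emph{residual lower bound} satisfies $f(\eps)\le\Delta-1$; then every surviving tight hyperedge satisfies $\sum_{s\in\eps}m_\eps(s)x(s)=f(\eps)\ge\Delta$, which is the quantity the counting charges to $\eps$. Under your rule a surviving tight hyperedge only guarantees $m_\eps(\eps)\ge\Delta+1$, while $\sum_s m_\eps(s)x(s)=f(\eps)$ may be as small as $1$, so $\eps$ cannot collect $\Delta$ units of $x$-weighted charge. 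Second, your integer-token scheme (each $s$ pays $m_\eps(s)$ whole tokens to every $\eps\ni s$, total $\le\Delta$, "with a reserve for the laminar family") cannot yield the improved $\Delta-1$ bound: each laminar set has at least two private elements, so a reserve of $k$ per element feeds a laminar set only $2k$ tokens; demanding $2k\ge\Delta+k$ forces $k\ge\Delta$, a $2\Delta$ budget per element, and hyperedges needing $2\Delta$ tokens---which is precisely Theorem~\ref{thm:matroid1}'s weaker $2\Delta-1$ argument. The actual improvement comes from a \emph{fractional} counting: element $s$ contributes $\tfrac{x(s)}{\Delta}\sum_{\eps\in\Eps',\,\eps\ni s}m_\eps(s)\le x(s)$ to the tight hyperedges and $1-x(s)$ to its smallest laminar set; each tight hyperedge then collects $f(\eps)/\Delta\ge1$, each laminar set collects $|P|-x(P)\ge1$ on its private part $P$ (a positive integer), and summing gives $|\Eps'|+|\varL|\le|S|$ with equality only if $\sum_{\eps\in\Eps'}m_\eps=\Delta\cdot\chi_S=\Delta\sum_{W\in\varL^{\max}(S)}\chi_W$, contradicting linear independence. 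This splitting of each element's unit of charge into an $x(s)$ part and a $(1-x(s))$ part is the missing idea; without it (and without the matching removal rule) the theorem's additive error $\Delta-1$ is not established.
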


%\rnote{just commenting this out \\
%Theorem~\ref{thm:matroid1} extends the result of Kir{\'a}ly et al.~\cite{KiralyEtAl2012} to polymatroids when only lower bounds are present.
%In the light of this, one might be interested in obtaining similar results when only upper bounds (or both) are given.
%It turns out that there is significant difference when polymatroids are considered instead of matroids.
%Adapting the algorithm of Kir{\'a}ly et al.~\cite{KiralyEtAl2012} seems to be difficult, as a crucial step of their approach is to relax the problem by deleting a constraint corresponding to a hyperedge $\eps$ with small $g(\eps)$ value.
%This step is feasible when the solution is a $0{-}1$ vector, as in those cases the violation on $\eps$ is upper bounded by the size of the hyperedge.
%However, this does not hold for polymatroids where an integral element might have coordinates larger than $1$.}

While being interesting by itself, the algorithm alluded to in Theorem~\ref{thm:matroid2} serves as the key ingredient for our second main algorithmic result.
It concerns an extension of the {\sc Traveling Salesman Problem} (TSP), one of the cornerstones of combinatorial optimization.
In TSP, we are given a set of $n$ cities with their pairwise non-negative symmetric distances, and we seek a tour of minimum overall length that visits every city exactly once and returns to the origin.
For the metric variant, when distances obey the triangle inequality, Christofides~\cite{Christofides1976} in 1976 gave a polynomial-time algorithm that returns a 1.5-approximation to the optimal tour.
The algorithm was independently discovered by Serdyukov~\cite{Serdyukov1978}.
For more than 40 years, no polynomial-time algorithm with better approximation guarantee has been discovered.

In the generalization of the TSP, known as the {\sc Many-Visits TSP}, each city~$v$ is equipped with a request $r(v)\in\mathbb{Z}_{\geq 1}$, and we seek a tour of minimum overall length that visits city $v$ exactly $r(v)$~times and returns to the origin.
Note that a loop might have a positive cost at any city in this case. 
The {\sc Many-Visits TSP} was first considered in 1966 by Rothkopf~\cite{Rothkopf1966}.
The problem is clearly $\mathsf{NP}$-hard as it generalizes the TSP.
In 1980, Psaraftis \cite{Psaraftis1980} gave a dynamic programming algorithm with time complexity $\mathcal O(n^2\prod_{i=1}^n (r_i+1))$; observe that this value may be as large as $(r/n+ 1)^n$, which is prohibitive even for moderately large values of $r = \sum_{i=1}^n r_i$.
In 1984, Cosmadakis and Papadimitriou~\cite{CosmadakisPapadimitriou1984} designed a family of algorithms, the fastest of which has run time\footnote{The $\mathcal{O}^\star$ notation suppresses the factors polynomial in $n$.} $\mathcal O^\star(n^{2n}2^n + \log\sum r_i)$.

The analysis of the algorithm is highly non-trivial, combining graph-theoretic insights and involved estimates of various combinatorial quantities.
The usefulness of the Cosmadakis-Papadimitriou algorithm is limited by its superexponential dependence on~$n$ in the run time, as well as its superexponential space requirement.
Recently, Berger et al.~\cite{BergerEtAl2019} simultaneously improved the run time to $2^{\mathcal O(n)}\cdot \log \sum r_i$ and reduced the space complexity to polynomial.

As it is a generalization of the TSP, the {\sc Many-Visits TSP} is of fundamental interest.
This framework can be used for modeling \textit{high-multiplicity} scheduling problems~\cite{Psaraftis1980,HochbaumShamir1991,BraunerEtAl2005,vanderVeenZhang1996}.
In such problems, every job belongs to a job type, and two jobs of the same type are considered to be identical.
%  The number of job types is small (constant), while the total number of jobs is large.
One notable example of such problems is the \textit{aircraft sequencing problem}.
Airplanes are categorized into a small number of different classes.
Two airplanes belonging to the same class need the same amount of time to land.
In addition, there is a minimum time that should pass between the arrival of two planes.
The amount of this time only depends on the classes of the two airplanes, and the aim is to minimize the time when the last plane lands.

At the Hausdorff Workshop on Combinatorial Optimization in 2018, Rico Zenklusen brought up the topic of approximation algorithms for the metric version of {\sc Many-Visits TSP} in the context of iterative relaxation techniques; he suggested an approach to obtain a 1.5-approximation, which is unpublished.
%asked\footnote{The fourth author thanks Rico Zenklusen for posing the problem and initial discussions on the subject.} for a polynomial-time approximation algorithm for {\sc Many-Visits TSP} with metric cost functions.
The cost function being metric implies that the cost of each loop $c_{ii}$ is at most twice the cost of leaving city $i$ to any other city $j$ and returning.
The assumption of metric costs is necessary, as the TSP, and therefore the {\sc Many-Visits TSP} does not admit any non-trivial approximation for unrestricted cost functions.

Our next algorithmic result shows that 
%answers Zenklusen's question in a very strong form.
%Namely, we give 
a polynomial-time algorithm that matches the approximation guarantee of Christofides and Serdyukov for the single-visit case indeed exists.
\begin{theorem}
\label{thm:tsp1}
  There is a polynomial-time $1.5$-approximation for the metric {\sc Many-Visits TSP}.   
  %There is a polynomial-time algorithm that provides a $1.5$-approximation for the metric {\sc Many-Visits TSP}.
\end{theorem}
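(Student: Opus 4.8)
The plan is to mimic Christofides' algorithm, replacing the minimum spanning tree by a call to the algorithm of Theorem~\ref{thm:matroid2} and treating the parity‑correction and shortcutting steps in their natural many‑visits form. A tour for the metric {\sc Many-Visits TSP} is exactly a connected Eulerian multigraph $H$ on the city set $V$ (loops allowed, a loop contributing $2$ to the degree) with $\deg_H(v)=2r(v)$ for all $v$, its length being the edge cost $c(H)$. I would first relax the degree equalities to $\deg_H(v)\ge 2r(v)$: given such an $H$, fix an Euler tour and, while some $v$ has $\deg_H(v)\ge 2r(v)+2$, either delete a loop at $v$ or replace a passage $\dots a\,v\,b\dots$ by $\dots a\,b\dots$ (possibly with $a=b$, creating a loop at $a$); the cost never increases because $c_{ab}\le c_{av}+c_{vb}$, $c_{aa}\le 2c_{av}$, and loop costs are nonnegative — this is the only place the metric assumption on loop costs is used. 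Each step lowers $\sum_v(\deg_H(v)-2r(v))$ by $2$ while keeping $H$ connected and Eulerian (all $\deg_H(v)\ge 2$ stay positive), so after polynomially many steps $H$ becomes a genuine tour of no larger cost. It thus suffices to construct a connected Eulerian $H$ with $\deg_H(v)\ge 2r(v)$ and $c(H)\le\tfrac32\OPT$.

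\textbf{The spanning part.} I would write $H=F\uplus M$, with $F$ carrying connectivity and nearly the whole degree requirement. The observation driving the construction is that a multigraph is a connected spanning subgraph of $K_n$ iff it is a spanning tree plus an arbitrary (loopy) multigraph; thus an optimal tour decomposes as $H^*=T^*\uplus K^*$ with $T^*$ a spanning tree of the underlying simple graph and $K^*\ge0$. I would therefore take $S=E_1\sqcup E_2$ to be two copies of $E(K_n)$ (loops appearing only in $E_2$), and the g-polymatroid $Q(p,b):=Q_1\oplus Q_2$, the direct sum of the base polytope $Q_1$ of the graphic matroid of $K_n$ (whose integral elements are the spanning trees) and the box $Q_2$ of vectors on $E_2$ with coordinates in $[0,R]$, $R:=\sum_v r(v)$ (whose integral elements are the bounded multigraphs); a direct sum of g-polymatroids is again a g-polymatroid. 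For each vertex $v$ put a hyperedge $\eps_v$ consisting of the edges of $\delta(v)$ in both copies, with multiplicity $1$ on non-loop edges and $2$ on the loop at $v$ in $E_2$, so that $\sum_{s\in\eps_v}m_{\eps_v}(s)x(s)$ is precisely the combined degree of $x$ at $v$ (this is exactly where the multiplicities are needed); assign cost $c(e)$ to both copies and lower bound $f(\eps_v)=2r(v)$. Then $\Delta=\max_e\sum_{\eps\ni e}m_\eps(e)=2$, and the incidence vectors of $T^*$ and $K^*$ form a feasible integral solution of cost $\OPT$, so the optimum value is $\le\OPT$. The algorithm of Theorem~\ref{thm:matroid2} (lower bounds only) then returns an integral element $(x_1,x_2)$ with $c(x_1)+c(x_2)\le\OPT$ and combined degree at least $2r(v)-\Delta+1=2r(v)-1$ at each $v$; with $T:=x_1$ (a spanning tree), $K:=x_2$ (a multigraph) and $F:=T\uplus K$, the multigraph $F$ is connected, satisfies $\deg_F(v)\ge 2r(v)-1$, and has $c(F)\le\OPT$.

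\textbf{The parity part and conclusion.} Let $O=\{v:\deg_F(v)\text{ odd}\}$, which has even size because $\sum_v\deg_F(v)$ is even. If $v\notin O$ then $\deg_F(v)$ is even and $\ge 2r(v)-1$, hence $\ge 2r(v)$; if $v\in O$ then $\deg_F(v)\ge 2r(v)-1$. Let $M$ be a minimum-cost perfect matching on $O$ in the metric. Then $H:=F\uplus M$ is Eulerian (all degrees even), connected (it contains $T$), and has $\deg_H(v)\ge 2r(v)$ for every $v$ — for $v\in O$ the matching adds exactly $1$, turning $\deg_F(v)\ge 2r(v)-1$ into $\ge 2r(v)$. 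To bound $c(M)$ I would use the standard argument: shortcutting an Euler tour of $H^*$ onto the vertex set $O$ yields a connected Eulerian multigraph on $O$ of cost $\le\OPT$, which can be shortcut further to a Hamilton cycle on $O$ of cost $\le\OPT$ (the cases $|O|\le2$ are immediate), and this cycle is the union of two perfect matchings on $O$, the cheaper of which costs $\le\tfrac12\OPT$; hence $c(M)\le\tfrac12\OPT$. Then $c(H)=c(F)+c(M)\le\tfrac32\OPT$, and the shortcutting of the first paragraph converts $H$ into a tour of length $\le\tfrac32\OPT$ in polynomial time.

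\textbf{Main obstacle.} I expect the crux to be the second step: one must set up the g-polymatroid, the hyperedges and the multiplicities so that $\Delta=2$ \emph{and} the unavoidable additive error $\Delta-1=1$ of Theorem~\ref{thm:matroid2} is forced onto odd-degree vertices only — it is precisely this that lets the single perfect matching of the third step repair all degree deficits simultaneously, and it would break down under the weaker $2\Delta-1$ error of Theorem~\ref{thm:matroid1}. Two further points need care: the bookkeeping for loops (multiplicity $2$ in the degree constraints, and separate handling in both shortcutting arguments), and verifying that the decomposition of an optimal tour into a spanning tree plus a residual multigraph is genuinely a feasible point of the relaxation so that the bound $c(F)\le\OPT$ is legitimate.
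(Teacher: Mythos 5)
Your overall strategy is the same as the paper's (replace Christofides' spanning tree by a degree-lower-bounded element of a suitable g-polymatroid obtained from Theorem~\ref{thm:matroid2} with $\Delta=2$, add a minimum-cost matching on the odd-degree vertices, then shortcut), and your cost analysis $c(F)+c(M)\le \OPT+\tfrac12\OPT$ and the parity argument are correct. The difference is the polymatroid: you take a direct sum of the graphic-matroid base polytope with the box $[0,R]^{E_2}$, whereas the paper uses the single base polymatroid with border function $b(F)=|V(F)|-\comp(F)+\hat r$, whose integral points are exactly the connected multigraphs with $x(E)=r(V)$ (Lemmas~\ref{lem:bdef} and~\ref{lem:description}).

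This difference creates a genuine gap in your polynomial-time claim. Theorem~\ref{thm:matroid2} guarantees only a \emph{lower} bound on the degrees and a cost bound; it says nothing about how large the returned element is. In your construction nothing pins down $x_2(E_2)$: the box allows each coordinate up to $R=r(V)$, and since $c\ge 0$ may vanish on many edges, a basic optimal LP solution (and hence the rounded output) can have $x_2(E_2)$ of order $n^2R$, which is exponential in the input size because $r$ is encoded in binary. Consequently the surplus $\sum_v\bigl(\deg_H(v)-2r(v)\bigr)$ that your first paragraph must remove is \emph{not} polynomially bounded, so ``after polynomially many steps'' is unjustified; your attempted local repair would also have to cope with the fact that reducing $x_2(uv)$ lowers the degree at both endpoints, so one cannot simply trim surplus edges without violating the $2r(v)-1$ guarantee elsewhere. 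The paper avoids this precisely by working in a \emph{base} polymatroid: $z(E)=r(V)$ forces $\sum_v d_z(v)=2r(V)$, so the total surplus equals the total deficit and is at most $|V|$, and after adding the matching it is $\mathcal O(n)$. A minimal fix for your construction is to replace the box by the polymatroid $\{x_2\ge 0: x_2(E_2)\le r(V)-(n-1)\}$ and take the base of the resulting sum — which is essentially the paper's $b$. A second, related point you leave untreated: since $r(V)$ is exponential, the Euler tour has exponentially many edges, so ``fix an Euler tour and replace a passage'' cannot be executed literally; all tour constructions and shortcuts must be performed on a compact representation (cycle/multiplicity pairs), which the paper spends some effort on and which only works because the number of shortcut operations is $\mathcal O(n)$.
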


Let us remark that the requirements $r(v)$ are encoded in binary.
The TSP can also be formulated for directed graphs, where the cost function is asymmetric.
In a recent breakthrough, Svensson et al.~\cite{SvenssonEtAl2018} gave the first constant-factor approximation for the metric ATSP.
We can show the following:
\begin{theorem}
\label{thm:tsp2}
  There is a polynomial-time $\mathcal O(1)$-approximation for the metric {\sc Many-Visits ATSP}.
\end{theorem}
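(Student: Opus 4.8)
\emph{Proof plan.} The plan is to decouple the two obligations of a many-visits tour of an asymmetric metric instance---hitting each city $v$ exactly $r(v)$ times (a \emph{degree} requirement) and forming a single closed walk (a \emph{connectivity} requirement)---paying for the first through a min-cost-flow computation and for the second through one black-box call to the metric ATSP approximation of Svensson, Tarnawski and V\'egh~\cite{SvenssonEtAl2018}; write $\alpha=\mathcal O(1)$ for their ratio and $\OPT$ for the optimum of the given instance. First I would compute a minimum-cost integral circulation $x$ on the complete digraph on the cities (loops allowed) subject to $\sum_{a\in\delta^+(v)}x(a)=\sum_{a\in\delta^-(v)}x(a)=r(v)$ for every city $v$. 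This is a transportation problem, solvable in time polynomial in $n$ and $\log\max_v r(v)$, so the binary encoding of $r$ causes no trouble; any optimal many-visits tour induces a feasible $x$, hence $c(x)\le\OPT$. Since $r(v)\ge 1$, the support of $x$ is a balanced multigraph on $\mathcal O(n^2)$ arcs with positive degree at every vertex, so it splits into $k\le n$ weakly connected pieces $C_1,\dots,C_k$, each admitting an Eulerian closed walk.

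Next I would glue the pieces together. Fix a representative $v_i\in V(C_i)$, run~\cite{SvenssonEtAl2018} on the metric ATSP instance with vertex set $\{v_1,\dots,v_k\}$ and costs inherited from $c$, and obtain a Hamiltonian cycle $H$ on the representatives of cost at most $\alpha$ times the optimum $\OPT_k$ of that instance. Shortcutting an optimal many-visits tour to one occurrence of each representative shows $\OPT_k\le\OPT$ by the directed triangle inequality, so $c(H)\le\alpha\cdot\OPT$. Forming the multigraph union of $H$ with the support of $x$ yields a multigraph that is weakly connected (as $H$ links all the $C_i$) and still balanced at every vertex, hence Eulerian; an Euler circuit of it is a closed walk of cost at most $(1+\alpha)\OPT$ visiting each representative $r(v_i)+1$ times and every other city exactly $r(v)$ times.

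It remains to shave off the $k$ surplus visits. I would regard the Euler circuit as a cyclic word: a representative $v_i$ appears $r(v_i)+1\ge 2$ times, so one occurrence is flanked as $\ldots,a,v_i,b,\ldots$ with $a,b\neq v_i$ (or $v_i$ sits next to a copy of itself, i.e.\ a loop is used there), and I replace $a,v_i,b$ by the single arc $a\to b$ (respectively delete that loop). This keeps the object a single closed walk, does not increase the cost---by the triangle inequality, resp.\ by non-negativity of costs---, drops the multiplicity of $v_i$ by one, and leaves every other multiplicity untouched. Doing this once per representative produces a many-visits tour of cost at most $(1+\alpha)\OPT$.

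The step I expect to be the genuine obstacle is keeping the running time polynomial in $n$ and $\log\max_v r(v)$ rather than in $\sum_v r(v)$: the naive reduction that replaces city $v$ by $r(v)$ twins and invokes single-visit ATSP is ruled out by the binary encoding of $r$. The scheme above avoids this because the multiplicities enter only in the (strongly polynomial) min-cost-flow computation and the final local re-routing, whereas the ATSP oracle is called on an instance of size $k\le n$; a secondary worry---whether deleting a visit to a representative can disconnect the solution---disappears once one argues about the closed walk itself instead of the underlying multigraph, as above. As a sanity check, running the same argument with Christofides'/Serdyukov's algorithm~\cite{Christofides1976} in place of~\cite{SvenssonEtAl2018} re-derives Theorem~\ref{thm:tsp1} up to the value of the constant.
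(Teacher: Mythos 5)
Your proposal is correct and achieves the same $(1+\alpha)$ guarantee, where $\alpha$ is the constant of Svensson et al., but it is a genuinely different route from the paper's. The paper simply reuses Algorithm~\ref{alg:apx_tp} and Theorem~\ref{thm:simple}: it runs the ATSP approximation on the \emph{full} vertex set to obtain a directed Hamiltonian cycle of cost at most $\alpha\cdot c(\TR^\star_{c,r})$, solves the transportation problem with prescriptions $r(v)-1$, and returns the union; the in- and out-degrees then add up to exactly $r(v)$ at every vertex and connectivity comes for free from the Hamiltonian cycle, so no post-processing is needed. You instead solve the circulation problem with the full prescriptions $r(v)$, invoke the ATSP oracle only on one representative per weakly connected component of its support, and then remove the $k\le n$ surplus visits by shortcutting an Euler circuit. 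What your version buys: the oracle is called on an instance of size $k\le n$ that may be much smaller, and the circulation is bounded directly by $\OPT$ rather than through the two separate comparisons $c(\TR^\star_{c,1})\le c(\TR^\star_{c,r})$ and $c(\TP^\star_{c,r-1})\le c(\TP^\star_{c,r})$; what it costs is the extra shortcutting phase. The one point you should make explicit is that this phase has to operate on a \emph{compact representation} of the Euler circuit --- the cyclic word has length $r(V)+k$, which is exponential in the binary encoding of $r$, so "regard the Euler circuit as a cyclic word" cannot be taken literally. The cycle-decomposition and implicit-ordering machinery the paper develops in Sect.~\ref{sec:approx} for the symmetric case does exactly this, and since only $k\le n$ local edits are required, this is an implementation detail rather than a gap; your observation that one must argue about the closed walk rather than the underlying multigraph (to avoid disconnection when deleting a visit) is exactly the right safeguard.
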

   
The rest of the paper is organized as follows.
In Sect.~\ref{sec:pre}, we give an overview of the notation and definitions.
In Sect.~\ref{sec:simple52approximation}, we provide a simple 2.5-approximation for the metric {\sc Many-Visits TSP} that runs in polynomial-time, and a polynomial-time constant-factor approximation for the metric {\sc Many-Visits ATSP}.
Thereafter, in Sect.~\ref{sec:bp}, we give the necessary background on g-polymatroids.
Sect.~\ref{sec:approxpolymatroid} describes the approximation algorithm for the {\sc Bounded Degree g-polymatroid Element Problem with Multiplicities}.
The 1.5-approximation for the metric {\sc Many-Visits TSP} is given in Sect.~\ref{sec:approx}.
We conclude in Sect.~\ref{sec:discussion}.

\section{Preliminaries}
\label{sec:pre}
Throughout the paper, we let $G=(V,E)$ be a finite, undirected complete graph on $n$ vertices, whose edge set $E$ also contains a self-loop at every vertex $v\in V$.
For a subset $F\subseteq E$ of edges, the \emph{set of vertices covered by $F$} is denoted by $V(F)$.
The \emph{number of connected components} of the graph $(V(F),F)$ is denoted by $\comp(F)$.
For a subset $X\subseteq V$ of vertices, the \emph{set of edges spanned by~$X$} is denoted by $E(X)$.
The set of edges incident to a vertex $v$ is denoted by~$\delta(v)$.
For a vector $x\in\mathbb{R}^{|E|}$, we denote the sum of the $x$-values on the edges incident to~$v$ by~$d_x(v)$.
Note that the $x$-value of the self-loop at $v$ is counted twice in~$d_x(v)$.
Given two graphs~$H_1,H_2$ on the same vertex set, $H_1+H_2$ denotes the multigraph on the same vertex set obtained by taking the union of the edge sets of~$H_1$ and~$H_2$.

Given a vector $x\in\mathbb{R}^{|S|}$ and a set $Z\subseteq S$, we use $x(Z)=\sum_{s\in Z} x(s)$.
The \emph{lower integer part of $x$} is denoted by $\floor{x}$, so $\floor{x} (s)=\floor{x(s)}$ for every $s\in S$.
This notation extends to sets as well, therefore by $\floor{x}(Z) $ we mean $\sum_{s \in Z} \floor{x}(s)$.
The \emph{support of $x$} is denoted by $\supp(x)$, that is, $\supp(x)=\{s\in S:x(s)\neq 0\}$.
The \emph{difference of set $B$ from set $A$} is denoted by $A-B=\{s\in A : s\notin B\}$.
We denote a single-element set $\{s\}$ by $s$, and with a slight abuse of notation, will write $A-s$ to indicate $A- \{s\}$.
The \emph{charasteristic vector} of a set $A$ is denoted by $\chi_A$.

Let $\varT$ be a collection of subsets of $S$.
We call $\mathcal{L} \subseteq \varT$ an \emph{independent laminar system} if for any pair $X, Y \in \mathcal{L}$: (i) they do not intersect, i.e. either $X \subseteq Y$, $Y \subseteq X$ or $X \cap Y = \emptyset$, (ii) the characteristic vectors $\chi_Z$ of the sets $Z \in \mathcal{L}$ are independent.
A \emph{maximal} independent laminar system $\mathcal{L}$ with respect to $\varT$ is an independent laminar system in $\varT$, such that for any $Y \in \varT-\mathcal{L}$ the system $\mathcal{L} \cup \{Y\}$ is not independent laminar.
In other words, if we include any set $Y$ from $\varT-\mathcal{L}$, it will intersect at least one set $Y$ from $\mathcal{L}$, or $\chi_Y$ can be given as a linear combination of $\{ \chi_Z: Z \in \mathcal{L} \}$. 
Given a laminar system~$\mathcal{L}$ and a set $X\subseteq S$, the set of maximal members of $\mathcal{L}$ lying inside $X$ is denoted by $\mathcal{L}^{\max}(X)$, that is, $\mathcal{L}^{\max}(X)=\{Y\in\mathcal{L}:\ Y\subset X,\ \not\exists Y'\in\mathcal{L}\ \text{s.t.}\ Y\subset Y'\subset X\}$.

The cost functions $c:E\rightarrow\mathbb{R}_{\geq 0}$ are assumed to satisfy the triangle inequality.
The \emph{minimum cost of an edge incident to a vertex $v$} is denoted by $c_v^{\min} := \min_{u \in V} c(uv)$.
Note that $u=v$ is allowed in the definition, therefore the minimum takes into account the cost of the self-loop at~$v$ as well. 
The triangle inequality holds for self-loops, too, meaning that $c(vv) \leq 2 \cdot c_v^{\min}$ for all $v\in V$.
%\begin{equation}
%\label{eq:tri_loop}
%  c(vv) \leq 2 \cdot c_v^{\min} \enspace .
%\end{equation}

In the {\sc Many-Visits TSP}, each vertex $v\in V$ is additionally equipped with a request $r(v)\in\mathbb {Z}_{\geq 1}$ encoded in binary.
The goal is to find a minimum-cost closed walk (or \emph{tour}) on the edges of the graph that visits each vertex $v\in V$ exactly $r(v)$ times.
Listing all the edges of such a walk might be exponential in the size of the input, hence we always consider \emph{compact representations} of the solution and the multigraphs that arise in our algorithms.
That is, rather than storing an $r(V)$-long sequence of edges, for every edge $e$ we store its multiplicity $z(e)$ in the solution.
As there are at most~$n^2$ different edges in the solution each having multiplicity at most $\max_{v \in V} r(v)$, the space needed to store a feasible solution is $\mathcal O(n^2\log r(V))$.
Therefore a vector $z \in \mathbb{Z}_{\geq 0}^{E}$ represents a feasible tour if $d_z(v)=2\cdot r(v)$ for every $v\in V$ and $\supp(z)$ is a connected subgraph of $G$.

From this compact representation, one can compute a collection $\mathcal{C}$ of pairs $(C, \mu_C)$, where each~$C$ is a simple closed walk (cycle) and $\mu_C$ is the corresponding integer denoting the number of copies of $C$.
The number of such cycles $C$ is polynomial in $n$, and one can compute $\mathcal{C}$ in polynomial time (see, e.g., the procedure in Sect.~2 of Grigoriev and van de Klundert~\cite{Grigoriev2006}).
One can obtain the explicit order of the vertices from $(C, \mu_C)$ the following way: traverse $\mu_C$ copies of an arbitrary cycle~$C$, and whenever a vertex $u$ is reached for the first time, traverse~$\mu_{C'}$ copies of every cycle $C' \neq C$ containing $u$.
Note that while the size of~$\mathcal{C}$ is polynomial in $n$, the size of the explicit order of the vertices is exponential, hence the time complexity of the last step is also exponential in $n$.

Denote by $\TR^\star_{c,r}$ an optimal solution for an instance $(G,c,r)$ of the {\sc Many-Visits TSP}, and by~$\TR^\star_{c,1}$ an optimal tour for the single-visit TSP (i.e., when $r(v)=1$ for each $v\in V$). 
Relaxing the connectivity requirement for solutions of the {\sc Many-Visits TSP} yields Hitchcock's transportation problem, which is solvable in polynomial time \cite{EdmondsKarp1970} and whose optimal solution we denote by~$\TP^\star_{c,r}$. 

\section{A Simple \texorpdfstring{$2.5$}{2.5}-Approximation for the Metric Many-Visits TSP}
\label{sec:simple52approximation}
In this section we give a simple $2.5$-approximation algorithm for the metric {\sc Many-Visits TSP}; see Algorithm~\ref{alg:apx_tp}.
\begin{algorithm}[h!]
  \caption{A polynomial-time $(\alpha+1)$-approximation for the metric {\sc Many-Visits TSP}.\label{alg:apx_tp}}
  \begin{algorithmic}[1]
    \Statex \textbf{Input:} A complete undirected graph $G$, costs $c:E\rightarrow\mathbb{R}_{\geq 0}$ satisfying the triangle inequality, requirements $r:V\rightarrow\mathbb{Z}_{\geq 1}$.
    \Statex \textbf{Output:} A tour that visits each $v \in V$ exactly $r(v)$ times. 
    \State Calculate an $\alpha$-approximate solution $\TR^\alpha_{c,1}$ for the single-visit metric TSP instance $(G,c,1)$. \label{st:i}      
    \State Calculate an optimal solution $\TP^\star_{c,r-1}$ for the transportation problem with prescriptions $r(v)-1$ for $v\in V$. \label{st:ii}
    \State \textbf{return} $T = \TR^\alpha_{c,1} + \TP^\star_{c,r-1}$ \label{st:iii}
  \end{algorithmic}
\end{algorithm}

\begin{theorem}
\label{thm:simple}
  The multigraph $T$ returned by Algorithm~\ref{alg:apx_tp} is a feasible solution to the metric {\sc Many-Visits TSP} instance $(G,c,r)$.
  The cost of the tour $T$ is at most $(\alpha+1)\cdot c(\TR^\star_{c,r})$.
\end{theorem}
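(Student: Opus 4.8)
The plan is to verify the two assertions separately: first that $T$ is a genuine \textsc{Many-Visits TSP} solution, and then that $c(T)\le(\alpha+1)\cdot c(\TR^\star_{c,r})$. Both parts rest on the observation recalled in Sect.~\ref{sec:pre} that a vector $z\in\mathbb Z_{\ge0}^E$ encodes a feasible tour precisely when $d_z(v)=2r(v)$ for all $v$ and $\supp(z)$ is connected and spanning, together with the fact that shortcutting a walk via the triangle inequality (including the self-loop version $c(vv)\le 2c_v^{\min}$) never increases its cost.

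\emph{Feasibility.} I would first check the degree condition. Since $\TR^\alpha_{c,1}$ is a Hamiltonian tour, it contributes $2$ to $d_T(v)$ at every vertex, while $\TP^\star_{c,r-1}$ is a transportation solution for the prescriptions $r(v)-1$, hence contributes $2(r(v)-1)$; the two contributions sum to $2r(v)$, so $d_T(v)=2r(v)$ for all $v$. For connectivity, $\supp(\TR^\alpha_{c,1})$ is already connected and spans $V$, so $\supp(T)\supseteq\supp(\TR^\alpha_{c,1})$ is connected and spanning as well, and there are no isolated vertices because $r(v)\ge 1$. Thus the multigraph $T$ is connected and all its degrees are even, i.e. $T$ is Eulerian, and an Eulerian closed walk of $T$ visits each $v$ exactly $d_T(v)/2=r(v)$ times. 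Its compact representation $\mathcal C$ is computed in polynomial time as described in Sect.~\ref{sec:pre}.

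\emph{Cost.} By the choice of $\TR^\alpha_{c,1}$ we have $c(\TR^\alpha_{c,1})\le\alpha\cdot c(\TR^\star_{c,1})$, so it remains to bound $c(\TR^\star_{c,1})$ and $c(\TP^\star_{c,r-1})$ against $c(\TR^\star_{c,r})$. I would fix an Eulerian closed walk realizing $\TR^\star_{c,r}$, written as a cyclic sequence of $r(V)$ vertex occurrences in which each $v$ appears $r(v)$ times, and \emph{mark} one occurrence of each vertex. Replacing, for each pair of cyclically consecutive marked occurrences, the intervening run of unmarked occurrences by a single direct edge between the two marked vertices produces a closed walk through the $n$ marked occurrences visiting every vertex exactly once; since the shortcut edges are charged to edge-disjoint subpaths of the original walk, the triangle inequality gives $c(\TR^\star_{c,1})\le c(\TR^\star_{c,r})$. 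Symmetrically, contracting the marked occurrences (deleting them from the cyclic sequence and joining consecutive unmarked occurrences by direct edges) yields a closed walk visiting each $v$ exactly $r(v)-1$ times, i.e. a feasible solution of the transportation problem with prescriptions $r(v)-1$; the same disjoint-charging argument gives cost at most $c(\TR^\star_{c,r})$, and optimality of $\TP^\star_{c,r-1}$ yields $c(\TP^\star_{c,r-1})\le c(\TR^\star_{c,r})$. Adding the two estimates gives $c(T)=c(\TR^\alpha_{c,1})+c(\TP^\star_{c,r-1})\le(\alpha+1)\cdot c(\TR^\star_{c,r})$.

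\emph{Main obstacle.} The only non-routine part is the shortcutting bookkeeping: I must make sure that in both directions the new edges are charged against genuinely edge-disjoint pieces of the Eulerian walk, and that the self-loop inequality $c(vv)\le 2c_v^{\min}$ covers the degenerate runs (e.g. when two unmarked occurrences of the same vertex become consecutive, when only a single unmarked occurrence remains, or when $r(v)-1=0$ for some $v$). I also expect a remark is needed that the two cost bounds are charged against overlapping portions of $\TR^\star_{c,r}$ and therefore cannot be combined to improve on $\alpha+1$.
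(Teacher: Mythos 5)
Your proof is correct and follows essentially the same route as the paper: identical degree and connectivity arguments for feasibility, and the same decomposition $c(T)=c(\TR^\alpha_{c,1})+c(\TP^\star_{c,r-1})$ with each term bounded against $c(\TR^\star_{c,r})$. The only (harmless) difference is that you bound $c(\TP^\star_{c,r-1})\le c(\TR^\star_{c,r})$ directly by shortcutting the optimal many-visits tour, whereas the paper passes through the intermediate inequality $c(\TP^\star_{c,r-1})\le c(\TP^\star_{c,r})\le c(\TR^\star_{c,r})$; you also spell out the edge-disjoint charging and self-loop cases that the paper subsumes under ``by the triangle inequality.''
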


\begin{proof}
  The degree of each vertex $v\in V$ is $2$ in $\TR^\alpha_{c,1}$, and is $2\cdot(r(v)-1)$ in~$\TP^\star_{c,r-1}$; hence the total degree of $v$ in $T = \TR^\alpha_{c,1} + \TP^\star_{c,r-1}$ is $2\cdot r(v)$, as required.
  Since~$\TR^\alpha_{c,1}$ is connected, $T = \TR^\alpha_{c,1} + \TP^\star_{c,r-1}$ is also connected, implying that it is a feasible solution to the problem.

  The cost of the tour $T$ constructed by Algorithm~\ref{alg:apx_tp} is equal to $c(T) = c(\TR^\alpha_{c,1}) + c(\TP^\star_{c,r-1})$.
  The cost of $\TR^\alpha_{c,1}$ is at most $\alpha\cdot c(\TR^\star_{c,1})$.
  Note that $c(\TR^\star_{c,1})\leq c(\TR^\star_{c,r})$, as the cost function satisfies the triangle inequality.
  %The cost of $\TP^\star_{c,r-1}$ is $\textsc{Opt}^{\TP}_{r-1}$.
  Again, by the triangle inequality, $c(\TP^\star_{c,r-1})\leq c(\TP^\star_{c,r})$.
  Hence we get
  \begin{align*}
    c(T) &= c(\TR^\alpha_{c,1}) + c(\TP^\star_{c,r-1})\\
         & \leq \alpha\cdot c(\TR^\star_{c,1}) + c(\TP^\star_{c,r-1})\\
                                                    & \leq \alpha\cdot c(\TR^\star_{c,r}) + c(\TP^\star_{c,r}) \\
                                                    &\leq(\alpha+1)\cdot c(\TR^\star_{c,r}),
  \end{align*}
  proving the approximation guarantee stated in the theorem.
\end{proof}

Christofides' algorithm~\cite{Christofides1976} for the single-visit metric TSP provides an approximate solution with $\alpha=1.5$; thus we get the following:
\begin{corollary}
  There is a polynomial-time algorithm that provides a $2.5$-approxi\-ma\-tion for the metric {\sc Many-Visits TSP}.
\end{corollary}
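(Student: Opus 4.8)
The plan is to obtain the Corollary as a direct instantiation of Theorem~\ref{thm:simple}. Concretely, I would run Algorithm~\ref{alg:apx_tp} with the choice $\alpha = 1.5$, realizing Step~\ref{st:i} by the Christofides--Serdyukov algorithm: since the costs~$c$ satisfy the triangle inequality, this algorithm is applicable to the single-visit instance $(G,c,1)$ and returns, in polynomial time, a tour $\TR^{1.5}_{c,1}$ with $c(\TR^{1.5}_{c,1}) \le 1.5\cdot c(\TR^\star_{c,1})$. Feeding this into Theorem~\ref{thm:simple} immediately gives that the multigraph $T = \TR^{1.5}_{c,1} + \TP^\star_{c,r-1}$ is a feasible {\sc Many-Visits TSP} tour for $(G,c,r)$ and that $c(T) \le (\alpha+1)\cdot c(\TR^\star_{c,r}) = 2.5\cdot c(\TR^\star_{c,r})$, which is exactly the claimed approximation ratio.

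The only thing left to verify is that the whole procedure runs in polynomial time in the input size, bearing in mind that the requests~$r(v)$ are encoded in binary. Step~\ref{st:i} is polynomial by the classical analysis of Christofides' algorithm. Step~\ref{st:ii} asks for an optimal solution $\TP^\star_{c,r-1}$ of Hitchcock's transportation problem with prescriptions $r(v)-1$; as noted in Section~\ref{sec:pre}, this is solvable in polynomial time via the algorithm of Edmonds and Karp~\cite{EdmondsKarp1970}, and its output is produced in the compact representation (one multiplicity value per edge), of size $\mathcal O(n^2 \log r(V))$. Step~\ref{st:iii} merely adds the two edge-multiplicity vectors, which takes polynomial time in this compact encoding. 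Hence the combined algorithm is polynomial, and combining this with the cost bound above establishes the Corollary.

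I do not expect any genuine obstacle here: the statement is a plug-in of $\alpha = 1.5$ into an already-proved theorem, and the only mild point requiring care is the binary encoding of~$r$, which is handled by insisting throughout on the compact representation of tours and of the transportation-problem solution rather than on an explicit edge list. In short, the proof reads: apply Theorem~\ref{thm:simple} with Christofides' algorithm as the $\alpha$-approximation subroutine, $\alpha = 1.5$, and observe that all steps remain polynomial under binary-encoded requests.
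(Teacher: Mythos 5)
Your proposal is correct and follows exactly the paper's own route: plug $\alpha=1.5$ (Christofides) into Theorem~\ref{thm:simple}, then verify polynomiality of Steps~\ref{st:i}--\ref{st:iii}, with the transportation problem handled by the Edmonds--Karp scaling method and the compact edge-multiplicity representation taking care of the binary-encoded requests. No gaps.
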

\begin{proof}
  The approximation ratio follows immediately; it remains to argue that the algorithm runs in polynomial time.

  Finding an approximate solution for the single-visit TSP in Step~\ref{st:i} requires $\mathcal O(n^3)$ operations~\cite{Christofides1976}.
  The transportation problem in Step~\ref{st:ii} can be solved in $\mathcal O(n^3\log r(V))$ operations using the Edmonds-Karp scaling method~\cite{EdmondsKarp1970}.
  Finally, Step~\ref{st:iii} takes $\mathcal O(n^2 \log r(V))$ operations, therefore the total time complexity of the algorithm is $\mathcal O(n^3 \log r(V))$.
\end{proof}

For the metric {\sc Many-Visits ATSP}, in Step~\ref{st:i} of Algorithm~\ref{alg:apx_tp} we can apply the $\mathcal O(1)$-approximation for metric ATSP due to Svensson et al.~\cite{SvenssonEtAl2018}.
This leads to the proof of Theorem~\ref{thm:tsp2}.

\section{Polyhedral background}
\label{sec:bp}
In what follows, we make use of some basic notions and theorems of the theory of generalized polymatroids.
For background, see for example the paper of Frank and Tardos~\cite{frank1988generalized} or Chapter~14 in the book by Frank~\cite{frank2012connections}.

Given a ground set $S$, a set function $b:2^S\rightarrow\mathbb{Z}$ is \emph{submodular} if
\begin{equation*}
  b(X)+b(Y)\geq b(X\cap Y) + b(X\cup Y)
\end{equation*} 
holds for every pair of subsets $X,Y\subseteq S$.
A set function $p:2^S\rightarrow\mathbb{Z}$ is \emph{supermodular} if $-p$ is submodular.
As a generalization of matroid rank functions, Edmonds introduced the notion of polymatroids~\cite{Edmonds1970}.
A set function $b$ is a \emph{polymatroid function} if $b(\emptyset)=0$, $b$ is non-decreasing, and $b$ is submodular.

We define
\begin{equation*}
  P(b):=\{x\in\mathbb{R}^{S}_{\geq 0}: x(Y)\leq b(Y)\ \text{for every}\ Y\subseteq S\} \enspace.
\end{equation*}
The set of integral elements of $P(b)$ is called a \emph{polymatroidal set}.
Similarly, the \emph{base polymatroid}~$B(b)$ is defined by
\begin{equation*}
  B(b):=\{x\in\mathbb{R}^{S}: x(Y)\leq b(Y)\ \text{for every}\ Y\subseteq S, \, x(S)=b(S)\} \enspace .
\end{equation*}
Note that a base polymatroid is just a facet of the polymatroid $P(b)$.
In both cases, $b$ is called the \emph{border function} of the polyhedron.
Although non-negativity of $x$ is not assumed in the definition of~$B(b)$, this follows by the monotonicity of $b$ and the definition of $B(b)$: $x(s)=x(S)-x(S-s) \geq b(S)-b(S-s)\geq 0$ holds for every $s\in S$.
The set of integral elements of~$B(b)$ is called a \emph{base polymatroidal set}.
Edmonds~\cite{Edmonds1970} showed that the vertices of a polymatroid or a base polymatroid are integral, thus $P(b)$ is the convex hull of the corresponding polymatroidal set, while $B(b)$ is the convex hull of the corresponding base polymatroidal set.
For this reason, we will call the sets of integral elements of $P(b)$ and $B(b)$ simply a polymatroid and a base polymatroid.

Hassin~\cite{hassin1982minimum} introduced polyhedra bounded simultaneously by a nonnegative, monotone non-decreasing submodular function $b$ over a ground set $S$ from above and by a nonnegative, monotone non-decreasing supermodular function $p$ over $S$ from below, satisfying the so-called \emph{cross-inequality} linking the two functions:
\begin{equation*}
  b(X) - p(Y) \geq b(X - Y) - p(Y - X)\qquad~\mbox{for every pair of subsets}~X,Y\subseteq S \enspace .
\end{equation*}
We say that a pair $(p,b)$ of set functions over the same ground set $S$ is a \emph{paramodular pair} if $p(\emptyset)=b(\emptyset)=0$, $p$ is supermodular, $b$ is submodular, and the cross-inequality holds.
The slightly more general concept of generalized polymatroids was introduced by Frank~\cite{frank1984generalized}.
A \emph{generalized polymatroid}, or \emph{g-polymatroid} is a polyhedron of the form
\begin{equation*}
  Q(p,b):=\left\{ x \in \mathbb{R}^{S}: p(Y) \leq x(Y) \leq b(Y) \ \text{for every}\ Y \subseteq S \right\} \enspace , 
\end{equation*}
where $(p,b)$ is a paramodular pair.
Here, $(p,b)$ is called the \emph{border pair} of the polyhedron. 
It is known (see e.g. \cite{frank2012connections}) that a g-polymatroid defined by an integral paramodular pair is a non-empty integral polyhedron.

A special g-polymatroid is a box $T(\ell,u)=\{x\in \mathbb{R} \sp {S}:  \ell\leq x\leq u\}$ where $\ell:S\rightarrow \mathbb{Z} \cup \{-\infty \}$, $u:S\rightarrow \mathbb{Z}\cup \{\infty \}$ with $\ell\leq u$. 
Another illustrious example is base polymatroids. 
Indeed, given a polymatroid function $b$ with finite $b(S)$, its \emph{complementary set function} $p$ is defined for $X\subseteq S$ by $p(X):=b(S)-b(S-X)$. 
It is not difficult to check that $(p,b)$ is a paramodular pair and that $B(b)=Q(p,b)$.

The intersection $Q'$ of a g-polymatroid $Q=Q(p,b)$ and a box $T=T(\ell,u)$ is non-empty if and only if $\ell(Y)\leq b(Y)$ and $p(Y)\leq u(Y)$ hold for every $Y\subseteq S$. 
When $Q'$ is non-empty, its unique border pair $(p',b')$ is given by 
\begin{equation*}
  p'(Z) = \max\{ p(Z') - u(Z'-Z)+ \ell(Z-Z') : Z'\subseteq S\},
\end{equation*} 
\begin{equation*}
  b'(Z) = \min\{ b(Z') - \ell(Z'-Z)+ u(Z-Z') :  Z'\subseteq S\} \enspace .
\end{equation*} 

Given a g-polymatroid $Q(p, b)$ and $Z\subset S$, by \emph{deleting} $Z\subseteq S$ from $Q(p,b)$ we obtain a g-polymatroid $Q(p, b)\setminus Z$ defined on set $S - Z$ by the restrictions of~$p$ and $b$ to $S - Z$, that is,
\begin{equation*}
  Q(p, b)\setminus Z:=\{x\in\mathbb{R}^{S-Z}: p(Y) \leq x(Y)\leq b(Y)\ \text{for every}\ Y\subseteq S - Z\} \enspace .
\end{equation*}
In other words, $Q(p, b)\setminus Z$ is the projection of $Q(p, b)$ to the coordinates in $S - Z$.

Extending the notion of contraction is not immediate.
A set can be naturally identified with its characteristic vector, that is, contraction is basically an operation defined on $0{-}1$ vectors.
In our proof, we will need a generalization of this to the integral elements of a g-polymatroid.
However, such an element might have coordinates larger than one as well, hence finding the right definition is not straightforward. 
In the case of matroids, the most important property of contraction is the following: $I$ is an independent of $M/Z$ if and only if $F\cup I$ is independent in~$M$ for any maximal independent set~$F$ of~$Z$.
  
With this property in mind, we define the g-polymatroid obtained by the contraction of an integral vector $z\in Q(p,b)$ to be the polymatroid $Q(p',b'):=Q(p,b)/z$ on the same ground set~$S$ with the border functions
\begin{align*}
  p'(X) &:= p(X) - z(X) \\
  b'(X) &:= b(X) - z(X) \enspace .
\end{align*}
Observe that $p'$ is obtained as the difference of a supermodular and a modular function, implying that it is supermodular.
Similarly, $b'$ is submodular.
Moreover, $p'(\emptyset)=b'(\emptyset)=0$, and
\begin{align*}
  b'(X)-p'(Y)
  {}&{}=
  b(X)-z(X)-p(Y)+z(Y)\\
  {}&{}\geq
  b(X-Y)+p(Y-X)-z(X-Y)+z(Y-X)\\
  {}&{}=
  b'(X-Y)-p'(Y-X),
\end{align*}
hence $(p',b')$ is indeed a paramodular pair.
The main reason for defining the contraction of an element $z\in Q(p,b)$ is shown by the following lemma.
\begin{lemma}
\label{lem:contraction}
  Let $Q(p',b')$ be the polymatroid obtained by contracting $z\in Q(p,b)$.
  Then $x+z\in Q(p,b)$ for every $x\in Q(p',b')$.
\end{lemma}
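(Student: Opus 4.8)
The plan is to unfold the definitions of the two g-polymatroids and exploit the fact that the set function $Y\mapsto z(Y)$ is modular. First I would fix an arbitrary $x\in Q(p',b')$; by the defining inequalities of a g-polymatroid this means that $p'(Y)\le x(Y)\le b'(Y)$ holds for every $Y\subseteq S$. Substituting the definitions $p'(Y)=p(Y)-z(Y)$ and $b'(Y)=b(Y)-z(Y)$ and adding $z(Y)$ to all three sides yields $p(Y)\le x(Y)+z(Y)\le b(Y)$ for every $Y\subseteq S$. Since $z$ is modular, $x(Y)+z(Y)=\sum_{s\in Y}\bigl(x(s)+z(s)\bigr)=(x+z)(Y)$, so the chain of inequalities reads $p(Y)\le (x+z)(Y)\le b(Y)$ for every $Y\subseteq S$, which is exactly the membership condition $x+z\in Q(p,b)$. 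This finishes the argument.

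The only point that deserves a word of care is that the statement presupposes that $Q(p',b')$ is itself a genuine g-polymatroid, i.e. that $(p',b')$ is a paramodular pair; but this has already been checked in the paragraph immediately preceding the lemma (supermodularity of $p'$ and submodularity of $b'$ follow from subtracting a modular function, $p'(\emptyset)=b'(\emptyset)=0$, and the cross-inequality is verified there as well). Beyond that, there is no real obstacle: the contraction $Q(p,b)/z$ was defined precisely so that the bound on every subset $Y$ shifts by the modular amount $z(Y)$, and hence the containment holds on the nose, coordinate-subset by coordinate-subset, with no estimates required.
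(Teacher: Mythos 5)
Your proposal is correct and is essentially identical to the paper's proof: both unfold the definitions $p'(Y)=p(Y)-z(Y)$ and $b'(Y)=b(Y)-z(Y)$ and shift the inequalities $p'(Y)\le x(Y)\le b'(Y)$ by the modular quantity $z(Y)$ to obtain $p(Y)\le (x+z)(Y)\le b(Y)$ for every $Y\subseteq S$. Your remark that paramodularity of $(p',b')$ was already verified before the lemma is accurate and matches where the paper places that check.
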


\begin{proof}
  Let $x\in Q(p',b')$. By definition, this implies $p'(Y)\leq x(Y)\leq b'(Y)$ for $Y\subseteq S$. 
  Thus $p(Y)=p'(Y)+z(Y)\leq x(Y)+z(Y)\leq b'(Y)+z(Y)=b(Y)$, concluding the proof.
\end{proof}

Formally, the {\sc Bounded Degree g-po\-ly\-mat\-ro\-id Element Problem} takes as input a g-polymatroid $Q(p,b)$ with a cost function $c:S \rightarrow \mathbb{R}$, and a hypergraph $H=(S, \Eps)$ on the same ground set with lower and upper bounds $f,g:\Eps\rightarrow\mathbb{Z}_{\geq 0}$ and multiplicity vectors $ m_\eps:S\rightarrow\mathbb{Z}_{\geq0}$ for $\eps\in\Eps$ satisfying $m_\eps(s)=0$ for $s\in S-\eps$.
The objective is to find a minimum-cost element $x$ of $Q(p,b)$ such that $f(\eps) \leq \sum_{s\in \eps} m_\eps(s) x(s) \leq g(\eps)$ for each $\eps \in \Eps$.
  
\section{Approximating the Bounded Degree g-polymatroid Element Problem with Multiplicities}
\label{sec:approxpolymatroid}
The aim of this section is to prove Theorems~\ref{thm:matroid1} and~\ref{thm:matroid2}.
We start by formulating a linear programming relaxation for the {\sc Bounded Degree g-polymatroid Element Problem}:

\leqnomode

\begin{align*}
  \label{eq:lp_poly}
  \text{minimize} \qquad \sum_{s \in S}  c(s) \ &x(s) \\
  \text{subject to} \qquad p(Z)  \leq \ &x(Z) \leq b(Z) &\forall Z \subseteq S \tag{LP}\\
  \qquad f(\eps) \leq \sum_{s\in \eps} \ m_\eps(s) \, &x(s) \leq g(\eps) &\forall \eps \in \Eps %\\
%  \qquad &x(s) \geq 0 &\forall s\in S
\end{align*}

Although the program has an exponential number of constraints, it can be separated in polynomial time using submodular minimization \cite{iwata2001,mccormick2005,schrijver2000}.
Algorithm~\ref{alg:matd} generalizes the approach by Kir{\'a}ly et al.~\cite{KiralyEtAl2012}.
We iteratively solve the linear program, delete elements which get a zero value in the solution, update the solution values and perform a contraction on the polymatroid, or remove constraints arising from the hypergraph.
There is a significant difference between the first round of the algorithm and the later ones. 
In the first round, the bounds on the coordinates solely depend on $p$ and $b$, while in the subsequent rounds the whole problem is restricted to the unit cube.
It is somewhat surprising that this restriction affects neither the solvability of the problem nor the additive error.
Intuitively, the very first step of the algorithm fixes `most part' of each coordinate, and the following steps are changing their value by at most $1$.

\clearpage
\pagebreak

\begin{algorithm}[h!]
\caption{Approximation algorithm for the {\sc Bounded Degree g-poly\-matroid Element Problem with Multiplicities}}\label{alg:matd}
\begin{algorithmic}[1]
  \Statex \textbf{Input:} A g-polymatroid $Q(p,b)$ on ground set $S$, cost function $c:S\rightarrow\mathbb{R}$, a hypergraph $H = (S, \Eps)$, lower bounds $f,g:\Eps\rightarrow\mathbb{Z}_{\geq 0}$, multiplicities $m_\eps:S\rightarrow\mathbb{Z}_{\geq 0}$ for $\eps \in \Eps$ satisfying $m_\eps(s)=0$ for $s\in S-\eps$.
  \Statex \textbf{Output:} $z\in Q(p,b)$ of cost at most $\textsc{OPT}_{LP}$, violating the hyperedge constraints by at most $2\Delta-1$. 
  \State Initialize $z(s) \leftarrow 0$ for every $s\in S$.
  \While{$S\neq\emptyset$}
    \State \begin{varwidth}[t]{0.9\linewidth} Compute a basic optimal solution $x$ for \eqref{eq:lp_poly}. \\ (Note: starting from the second iteration, $0\leq x \leq 1$.) \end{varwidth}
    \begin{algsubstates}\
      \State \begin{varwidth}[t]{0.9\linewidth}
      Delete any element $s$ with $x(s)=0$.
      Update each hyperedge $\eps\leftarrow \eps-s$ and $m_\eps(s)\leftarrow 0$.
      Update the g-polymatroid $Q(p,b)\leftarrow Q(p,b)\setminus s$ by deletion. \label{st:del}
     	\end{varwidth}
     	\State \begin{varwidth}[t]{0.9\linewidth}
      For all $s\in S$ update $z(s) \leftarrow z(s) + \floor{x}(s)$.\\
      Apply polymatroid contraction $Q(p,b)\leftarrow Q(p,b)/\floor{x}$, that is, redefine $p(Y) := p(Y) - \floor{x}(Y)$ and $b(Y) := b(Y) - \floor{x}(Y)$ for every $Y \subseteq S$.\\
      Update $f(\eps) \leftarrow f(\eps) - \displaystyle\sum_{s\in \eps}\, m_\eps(s) \floor{x}(s)$ and $g(\eps) \leftarrow g(\eps) - \displaystyle\sum_{s\in \eps}\, m_\eps(s) \floor{x}(s)$ for each $\eps \in \Eps$.\label{st:inc}
      \end{varwidth}
    	\State If $m_\eps(\eps) \leq 2\Delta-1$, let $\Eps \leftarrow \Eps - \eps$. \label{st:rem}
      \State \begin{varwidth}[t]{0.9\linewidth} \textbf{if} it is the first iteration \textbf{then} \label{st:first}\\
      Take the intersection of $Q(p,b)$ and the unit cube $[0,1]^S$, that is, $p(Y):=\max\{ p(Y') - |Y'-Y| : Y'\subseteq S\}$ and $b(Y) := \min\{ b(Y')+ |Y-Y'| :  Y'\subseteq S\}$ for every $Y\subseteq S$.
      \end{varwidth}    	    	 
      \end{algsubstates}
    \EndWhile
  \State \textbf{return} $z$
\end{algorithmic}
\end{algorithm}

\begin{proof}[Proof of Theorem~\ref{thm:matroid1}]
$\;$
   \paragraph{\textbf{Correctness}}
  First we show that if the algorithm terminates then the returned solution $z$ satisfies the requirements of the theorem.  
  In a single iteration, the g-polymatroid $Q(p,b)$ is updated to $(Q(p,b)\setminus D)/\floor{x}$, where \mbox{$D=\{s:x(s)=0\}$} is the set of deleted elements.
  In the first iteration, the g-polymatroid thus obtained is further intersected with the unit cube.
  By Lemma~\ref{lem:contraction}, the vector $x-\lfloor x\rfloor$ restricted to $S-D$ remains a feasible solution for the modified linear program in the next iteration.
  Note that this vector is contained in the unit cube as its coordinates are between $0$ and $1$.
  This remains true when a lower degree constraint is removed in Step~\ref{st:rem} as well, therefore the cost of $z$ plus the cost of an optimal LP solution does not increase throughout the procedure.
  Hence the cost of the output~$z$ is at most the cost of the initial LP solution, which is at most the optimum.
  
  By Lemma~\ref{lem:contraction}, the vector $x-\lfloor x \rfloor+z$ is contained in the original g-polymatroid, although it might violate some of the lower and upper bounds on the hyperdeges.
  We only remove the constraints corresponding to the lower and upper bounds for a hyperedge $\eps$ when $m_\eps(\eps) \leq 2\Delta-1$.
  As the g-polymatroid is restricted to the unit cube after the first iteration, these constraints are violated by at most $2\Delta-1$, as the total value of $\sum_{s\in\eps}m_\eps(s)z(s)$ can change by a value between~$0$ and $2\Delta-1$ in the remaining iterations.
 
  It remains to show that the algorithm terminates successfully.
  The proof is based on similar arguments as in Kir{\'a}ly et al.~\cite[proof of Theorem 2]{KiralyEtAl2012}.

  \paragraph{\textbf{Termination}}
  Suppose, for sake of contradiction, that the algorithm does not terminate.
  Then there is some iteration after which none of the simplifications in Steps~\ref{st:del}-\ref{st:rem} can be performed.
  This implies that for the current basic LP solution $x$ it holds $0<x(s)<1$ for each $s \in S$ and $m_\eps(\eps)\geq 2\Delta$ for each $\eps \in \Eps$.
  We say that a set $Y$ is \emph{p-tight} (or \emph{b-tight}) if $x(Y)=p(Y)$ (or $x(Y)=b(Y)$), and let $\varT^p=\{Y\subseteq S:x(Y)=p(Y)\}$ and $\varT^b=\{Y\subseteq S:x(Y)=b(Y)\}$ denote the collections of $p$-tight and $b$-tight sets with respect to solution $x$.
  
  Let $\varL$ be a maximal independent laminar system in $\varT^p \cup \varT^b$.
  \begin{claim}
  \label{claim:uncrossing}
    $\spa{(\{\chi_Z: Z \in \varL\})} = \spa{(\{\chi_Z: Z \in \varT^p \cup \varT^b\})}$
  \end{claim}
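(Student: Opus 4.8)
The plan is to prove this by a standard uncrossing argument, showing that the span of tight-set characteristic vectors is unchanged when we restrict from all tight sets to a maximal independent laminar subfamily. Since $\varL \subseteq \varT^p \cup \varT^b$ trivially, one inclusion $\spa(\{\chi_Z : Z \in \varL\}) \subseteq \spa(\{\chi_Z : Z \in \varT^p \cup \varT^b\})$ is immediate, so the real content is the reverse inclusion. I would argue by contradiction: suppose there is a set $Y \in \varT^p \cup \varT^b$ with $\chi_Y \notin \spa(\{\chi_Z : Z \in \varL\})$, and among all such $Y$ choose one that crosses as few members of $\varL$ as possible (measuring, say, the number of $Z \in \varL$ with $Z \cap Y \neq \emptyset$, $Z \setminus Y \neq \emptyset$, $Y \setminus Z \neq \emptyset$). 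By maximality of $\varL$, adding $Y$ destroys the independent-laminar property, and since $\chi_Y$ is not in the span, it must be laminarity that fails — i.e. $Y$ properly crosses some $Z \in \varL$.

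The key step is the uncrossing inequality. Recall that $p$ is supermodular and $b$ is submodular, and that the cross-inequality links them. If $Y$ and $Z$ are both $b$-tight, then submodularity of $b$ together with $x(Y) + x(Z) = x(Y \cap Z) + x(Y \cup Z)$ and $x(Y \cap Z) \le b(Y \cap Z)$, $x(Y \cup Z) \le b(Y \cup Z)$ forces $Y \cap Z$ and $Y \cup Z$ to be $b$-tight as well; moreover $\chi_Y + \chi_Z = \chi_{Y \cap Z} + \chi_{Y \cup Z}$, so $\chi_Y \in \spa(\{\chi_{Y \cap Z}, \chi_{Y \cup Z}, \chi_Z\})$. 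Symmetrically, two $p$-tight sets can be replaced by their intersection and union, which are again $p$-tight. When $Y$ is $b$-tight and $Z$ is $p$-tight (the mixed case), the cross-inequality $b(Y) - p(Z) \ge b(Y \setminus Z) - p(Z \setminus Y)$, combined with $x(Y) - x(Z) = x(Y \setminus Z) - x(Z \setminus Y)$ and the bounds $x(Y \setminus Z) \le b(Y \setminus Z)$, $x(Z \setminus Y) \ge p(Z \setminus Y)$, forces $Y \setminus Z$ to be $b$-tight and $Z \setminus Y$ to be $p$-tight, with $\chi_Y - \chi_Z = \chi_{Y \setminus Z} - \chi_{Z \setminus Y}$, hence $\chi_Y \in \spa(\{\chi_{Y \setminus Z}, \chi_{Z \setminus Y}, \chi_Z\})$. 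In every case, $\chi_Y$ lies in the span of $\chi_Z$ together with the characteristic vectors of two new tight sets, each of which crosses strictly fewer members of $\varL$ than $Y$ did (the standard fact that uncrossing $Y$ against $Z$ does not increase the number of crossings with any other laminar set, and strictly decreases the crossing count with $Z$ itself).

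From here I would close the induction: since $\chi_Y \notin \spa(\{\chi_Z : Z \in \varL\})$ but $\chi_Y$ is a linear combination of $\chi_Z$ and two characteristic vectors of tight sets with fewer crossings, at least one of those two new tight sets — call it $Y'$ — also has $\chi_{Y'} \notin \spa(\{\chi_Z : Z \in \varL\})$, yet $Y'$ crosses fewer members of $\varL$ than $Y$, contradicting the minimal choice of $Y$. The base case, where $Y$ crosses no member of $\varL$, means $\varL \cup \{Y\}$ is still laminar, so by maximality of $\varL$ the vector $\chi_Y$ must already be in $\spa(\{\chi_Z : Z \in \varL\})$ — again a contradiction. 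This completes the reverse inclusion and hence the claim.

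The main obstacle I anticipate is handling the mixed case ($b$-tight crossed with $p$-tight) cleanly: unlike the pure cases, uncrossing there produces the set differences $Y \setminus Z$ and $Z \setminus Y$ rather than intersection and union, and one has to verify carefully both that these are the right objects for the cross-inequality to bite and that the crossing-count potential genuinely drops for whichever of the two new sets fails to be in the span. A secondary subtlety is bookkeeping the potential function so that \emph{every} uncrossing step — pure or mixed — is covered by a single monovariant; the cleanest route is to define the potential of a candidate $Y$ as $\sum_{Z \in \varL} |\text{(number of "pieces" $Z$ is split into by $\{Y, S \setminus Y\}$)}|$ or simply the number of $Z$ properly crossing $Y$, and check the well-known inequality that this does not increase under uncrossing.
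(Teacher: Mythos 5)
Your proposal is correct and follows essentially the same route as the paper: a minimal-counterexample choice of a tight set crossing as few members of $\varL$ as possible, uncrossing via supermodularity/submodularity in the pure cases and the cross-inequality in the mixed case, and the observation that the resulting tight sets cross strictly fewer members of $\varL$. Your mixed-case identity $\chi_Y - \chi_Z = \chi_{Y \setminus Z} - \chi_{Z \setminus Y}$ is in fact a slightly cleaner bookkeeping than the paper's (which invokes a three-set decomposition and contains a typo there), since it only involves the two difference sets whose tightness the cross-inequality actually establishes.
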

  \begin{proof}[Proof of Claim~\ref{claim:uncrossing}.]
  \renewcommand{\qedsymbol}{$\Diamond$}
    The proof uses an uncrossing argument.
    Let us suppose indirectly that there is a set $R$ from~$\varT^p \cup \varT^b$ for which $\chi_R \notin \spa{(\{\chi_Z: Z \in \varL\})}$.
    Choose this set $R$ so that it is incomparable to as few sets of $\varL$ as possible.
    Without loss of generality, we may assume that $R \in \varT^p$.
    Now choose a set $T \in \varL$ that is incomparable to $R$.
    Note that such a set necessarily exists as the laminar system is maximal.
    We distinguish two cases.
  
    \noindent \textbf{Case 1.} $T \in \varT^p$.
    Because of the supermodularity of $p$, we have
    \begin{align*}
      x(R) + x(T) &= p(R) + p(T) \leq p(R \cup T) + p(R \cap T) \leq x(R \cup T) + x(R \cap T)\\
                  &= x(R) + x(T) \enspace ,  
    \end{align*}
    hence equality holds throughout.
    That is, $R \cup T$ and $R \cap T$ are in $\varT^p$ as well.
    In addition, since $\chi_R + \chi_T = \chi_{R \cup T} + \chi_{R \cap T}$ and $\chi_R$ is not in $\spa{(\{\chi_Z: Z \in \varL\})}$, either $\chi_{R \cup T}$ or $\chi_{R \cap T}$ is not contained in $\spa{(\{\chi_Z: Z \in \varL\})}$.
    However, both $R \cup T$ and $R \cap T$ are incomparable with fewer sets of $\varL$ than~$R$, which is a contradiction.
  
    \noindent \textbf{Case 2.} $T \in \varT^b$.
    Because of the cross-inequality, we have
    \begin{align*}
      x(T) - x(R) &= b(T) - p(R) \geq b(T \setminus R) - p(R \setminus T) \geq x(T \setminus R) - x(R \setminus T)\\
                  &= x(T) - x(R) \enspace ,
    \end{align*}
    implying $T \setminus R \in \varT^b$ and $R \setminus T \in \varT^p$.
    Since $\chi_R + \chi_T = \chi_{R \setminus T} + \chi_{R \setminus T} + 2 \ \chi_{R \cup T}$ and $\chi_R$ is not in $\spa{(\{\chi_Z: Z \in \varL\})}$, one of the vectors $\chi_{R \setminus T}$, $\chi_{R \setminus T}$ and $\chi_{R \cup T}$ is not contained in $\spa{(\{\chi_Z: Z \in \varL\})}$.
    However, any of these three sets is incomparable with fewer sets of $\varL$ than~$R$, which is a contradiction.
  
    The case when $R \in \varT^b$ is analogous to the above.
    This completes the proof of the Claim.
  \end{proof}
  
  We say that a hyperedge $\eps \in \Eps$ is \emph{tight} if $f(\eps)=\sum_{s\in \eps} m_\eps(s) x(s)$ or $g(\eps)=\sum_{s\in \eps} m_\eps(s) x(s)$. % m_\eps(s)
  As $x$ is a basic solution, there is a set $\Eps'\subseteq\Eps$ of tight hyperedges such that $\{m_\eps: \eps \in \Eps'\}\cup\linebreak \{\chi_Z: Z \in \varL\}$ are linearly independent vectors with $|\Eps'|+|\varL|=|S|$.
  
  We derive a contradiction using a token-counting argument.
  We assign $2\Delta$ tokens to each element $s \in S$, accounting for a total of $2\Delta |S| $ tokens.
  The tokens are then redistributed in such a way that each hyperedge in $\Eps'$ and each set in $\varL$ collects at least $2\Delta$ tokens, while at least one extra token remains.
  This implies that $2\Delta |S|>2\Delta|\Eps'|+2\Delta|\varL|$, leading to a contradiction.
  
  We redistribute the tokens as follows.
  Each element $s$ gives $\Delta$ tokens to the smallest member in~$\varL$ it is contained in, and $m_\eps(s)$ token to each hyperedge $\eps\in\Eps'$ it is contained in.
  As \mbox{$\sum_{\eps\in\Eps:s\in \eps} m_\eps(s)\leq\Delta$} holds for every element $s \in S$, thus we redistribute at most $2\Delta$ tokens per element and so the redistribution step is valid.  
  Now consider any set $U\in\varL$. 
  Recall that $\varL^{\max}(U)$ consists of the maximal members of $\varL$ lying inside $U$. 
  Then $U-\bigcup_{W\in\varL^{\max}(U)} W\neq\emptyset$, as otherwise $\chi_U=\sum_{W\in\varL^{\max}(U)} \chi_W$, contradicting the independence of $\varL$.
  For every set $Z$ in~$\varL$, $x(Z)$ is an integer, meaning that $x(U - \bigcup_{W\in\varL^{\max}(U)} W)$ is an integer.  
  But also $0 < x(s) < 1$ for every $s \in S$, which means that $U - \bigcup_{W\in\varL^{\max}(U)} W$ contains at least 2 elements.
  Therefore, each set $U$ in $\varL$ receives at least $2 \Delta$ tokens, as required.
  By assumption, $m_\eps(\eps) \geq 2 \Delta$ for every hyperedge $\eps \in \Eps'$, which means that each hyperedge in $\Eps'$ receives at least $2\Delta$~tokens, as required. 

  If $\sum_{\eps\in\Eps':s\in \eps} m_\eps(s)\leq \Delta$ holds for any $s\in S$ or $\mathcal{L}^{\max}(S)$ is not a partition of $S$, then an extra token exists.
  Otherwise, $\sum_{\eps\in\Eps'}m_{\eps}=\Delta\cdot\chi_S=\sum_{W\in\varL^{\max}(S)}^q\chi_W$, contradicting the independence of $\{m_\eps: \eps \in \Eps'\}\cup \{\chi_Z: Z \in \varL\}$.
     
  \paragraph{\textbf{Time complexity}}
  Let us now prove that the run time of the algorithm is polynomial in the input size.
  Solving an LP, as well as removing an element from a hyperedge in Step~\ref{st:rem} or removing a hyperedge in Step~\ref{st:del} can be done in polynomial time.
  Now let us turn to the g-polymatroid contraction in Step~\ref{st:inc} and taking the intersection with the unit cube in Step~\ref{st:first}.
  The function value is not recalculated for every subset $Y \subseteq S$, as there is an exponential number of such subsets.
  Instead, we calculate the value of the current functions~$p$ and $b$ for a set $Y$ only when it is needed during the ellipsoid method.
  We keep track of the vectors $\floor{x}$ that arise during contraction steps (there is only a polynomial number of them), and every time a query for $p$ or $b$ happens, it takes into account every contraction and removal that occurred until that point.

  Let us now bound the number of iterations.
  In every iteration at least one of Steps~\ref{st:del}-\ref{st:rem} is executed.
  Clearly, Step~\ref{st:del} can be repeated at most $|S|$ times, while Step~\ref{st:rem} can be repeated at most $|\Eps|$ times.
  Starting from the second iteration, we are working in the unit cube. 
  That is, when Step~\ref{st:inc} adds the integer part of a variable $x(s)$ to $z(s)$ and reduces the problem, then the given variable will be $0$ in the next iteration and so element $s$ is deleted. 
  This means that the total number of iterations of Step~\ref{st:inc} is at most $\mathcal O(|S|)$.
  
  We therefore showed that the number of iterations, as well as the time complexity of each step taken by the algorithm can be bounded by the input size, meaning the algorithm runs in polynomial time.
\end{proof}

Now we turn to the proof of the case when only lower or only upper bounds are given.
\begin{proof}[Proof of Theorem~\ref{thm:matroid2}]
  The proof is similar to the proof of Theorem~\ref{thm:matroid1}, the main difference appears in the the counting argument.
  When only lower bounds are present, the condition in Step~\ref{st:rem} changes: we delete a hyperedge $\eps$ if $f(\eps)\leq\Delta-1$.
  Suppose, for the sake of contradiction, that the algorithm does not terminate.
  Then there is an iteration after which none of the simplifications in Steps~\ref{st:del}-\ref{st:rem} can be performed.
  This implies that in the current basic solution $0 < x(s) < 1$ holds for each $s \in S$ and $f(\eps) \geq \Delta$ for each $\eps \in \Eps$.
  We choose a subset $\Eps'\subseteq\Eps$ and a maximal independent laminar system~$\varL$ of tight sets the same way as in the proof of Theorem~\ref{thm:matroid1}.
  Recall that $|\Eps'| + |\varL| = |S|$.
  
  Let $Z_1, \dots, Z_k$ denote the members of the laminar system $\varL$. 
  As $\varL$ is an independent system, $Z_i-\bigcup_{W\in\mathcal{L}^{\max}(Z_i)}W\neq\emptyset$.
  Since $x(s)<1$ for all $s\in S$, $x(Z_i-\bigcup_{W\in\mathcal{L}^{\max}(Z_i)}W)<|Z_i-\bigcup_{W\in\mathcal{L}^{\max}(Z_i)}W|$.
  As we have integers on both sides of this inequality, we get 
  \begin{equation*}
    |Z_i-\!\!\!\bigcup_{W\in\mathcal{L}^{\max}(Z_i)}\!\!\!\!\!\!W|-x(Z_i-\!\!\!\bigcup_{W\in\mathcal{L}^{\max}(Z_i)}\!\!\!\!\!\!W)\geq 1\quad\text{for all}\ i=1,\dots,k \enspace .
  \end{equation*}
  Moreover, $\sum_{s\in\eps}m_{\eps}(s)x(s)\geq f(\eps)\geq\Delta$ for all hyperedges; therefore,
  \begin{align*}
    |\Eps'| + |\varL|
    {}&{}\leq 
    \sum_{\eps \in \Eps'} \frac{\sum_{s \in \eps} m_\eps(s) x(s)}{\Delta} + \sum_{i=1}^k \left[ |Z_i - \!\!\! \bigcup_{W \in \mathcal{L}^{\max}(Z_i)} \!\!\!\!\!\! W| - x(Z_i - \!\!\! \bigcup_{W \in \mathcal{L}^{\max}(Z_i)} \!\!\!\!\!\! W) \right] \\
    {}&{}= 
    \sum_{s \in S} \frac{x(s)}{\Delta} \sum_{\substack{\eps \in \Eps' \\ s\in \eps}} m_\eps(s) + \sum_{W \in \mathcal{L}^{\max}(S)}|W| - \sum_{W \in \mathcal{L}^{\max}(S)} x(W) \label{eq:optional} \\
    {}&{}\leq 
    |S| \enspace .
  \end{align*}
  In the last line, the first term is at most $x(S)$ since $\sum_{\eps\in\Eps:s\in \eps} m_\eps(s)\leq\Delta$ holds for each element~\mbox{$s \in S$}.
  From $x(S)- \sum_{W \in \mathcal{L}^{\max}(S)} x(W)\leq |S|-\sum_{W \in \mathcal{L}^{\max}(S)}|W|$ the upper bound of $|S|$ follows.
  As $|S| = |\varL| + \mathcal{|\Eps'|}$, we have equality throughout. 
  This implies that $\sum_{\eps \in \Eps'} m_\eps = \Delta \cdot \chi_S=\Delta\cdot\sum_{W\in\mathcal{L}^{\max}(S)}\chi_W$, contradicting linear independence.
  
  If only upper bounds are present, we remove a hyperedge $\eps$ in Step~\ref{st:rem} when $g(\eps)+\Delta-1 \geq m_\eps(\eps)$. 
  Suppose, for the sake of contradiction, that the algorithm does not terminate.
  Then there is an iteration after which none of the simplifications in Steps~\ref{st:del}-\ref{st:rem} can be performed.
  This implies that in the current basic solution $0 < x(s) < 1$ holds for each $s \in S$ and $m_\eps(\eps)-g(\eps) \geq \Delta$ for each $\eps \in \Eps$.
  Again, we choose a subset $\Eps'\subseteq\Eps$ and a maximal independent laminar system~$\varL$ of tight sets the same way as in the proof of Theorem~\ref{thm:matroid1}. 
     
  Let $Z_1, \dots, Z_k$ denote the members of the laminar system $\varL$. 
  As $\varL$ is an independent system, $Z_i-\bigcup_{W\in\mathcal{L}^{\max}(Z_i)}W\neq\emptyset$ and so
  \begin{equation*}
    x(Z_i - \!\!\! \bigcup_{W \in \mathcal{L}^{\max}(Z_i)} \!\!\!\!\!\! W) \geq 1 \enspace .
  \end{equation*}

  By $\sum_{s \in \eps} m_\eps(s) x(s) \leq g(\eps)$, we get $\sum_{s \in \eps} m_\eps(s)-\sum_{s \in \eps} m_\eps(s) x(s) \geq m_\eps(\eps)-g(\eps) \geq \Delta$. 
  Therefore,
  \begin{align*}
    |\Eps'| + |\varL|
    {}&{}\leq
    \sum_{\eps \in \Eps'} \frac{\sum_{s \in \eps} m_\eps(s)-\sum_{s \in \eps} m_\eps(s) x(s)}{\Delta} + \sum_{i=1}^k x(Z_i - \!\!\! \bigcup_{W \in \mathcal{L}^{\max}(Z_i)} \!\!\!\!\!\! W) \\
    {}&{}=
    \sum_{s \in S} \frac{1-x(s)}{\Delta} \sum_{\substack{\eps \in \Eps' \\ s\in \eps}} m_\eps(s) + \sum_{W \in \mathcal{L}^{\max}(S)} x(W) \label{eq:optional2} \\
    {}&{}\leq
    \sum_{s \in S} \frac{1-x(s)}{\Delta} \sum_{\substack{\eps \in \Eps' \\ s\in \eps}} m_\eps(s) +  x(S) \leq |S| \enspace .
  \end{align*} 
  In the last line, the first term is at most $|S|-x(S)$ since $\sum_{\eps\in\Eps:s\in \eps} m_\eps(s)\leq\Delta$ holds for every element $s \in S$.
  Therefore, the upper bound of $|S|$ follows.
  As $|S| = |\varL| + \mathcal{|\Eps'|}$, we have equality throughout. 
  %This implies that $\sum m_\eps(s) = \Delta$ for $s\in S$ and necessarily $Z_k=S$.
  This implies that $\sum_{\eps \in \Eps'} m_\eps = \Delta \cdot \chi_S=\Delta\cdot\sum_{W\in\mathcal{L}^{\max}(S)}\chi_W$, contradicting linear independence.
\end{proof}

We have seen in Sect.~\ref{sec:bp} that base polymatroids are special cases of g-polymatroids.
This implies that the results of Theorem~\ref{thm:matroid2} immediately apply to polymatroids.
Let us first formally define the problem.

In the {\sc Lower Bounded Degree Polymatroid Basis Problem with Multiplicities}, we are given a base polymatroid $B(b)=(S,b)$ with a cost function $c:S \rightarrow \mathbb{R}$, and a hypergraph $H=(S, \Eps)$ on the same ground set.
The input contains lower bounds $f: \Eps \rightarrow \mathbb{Z}_{\geq 0}$ and multiplicity vectors $m_\eps: \eps \rightarrow \mathbb{Z}_{\geq 1}$ for every hyperedge $\eps \in \Eps$.
The objective is to find a minimum-cost element $x \in B(b)$ such that $f(\eps) \leq \sum_{s \in \eps} m_\eps(s) x(s)$ holds for each $\eps \in \Eps$.

\begin{corollary}
\label{thm:polym}
  There is an algorithm for the {\sc Lower Bounded Degree Polymatroid Basis Problem with Multiplicities} that runs in polynomial time and returns an element $x$ of $B(b)$ of cost at most the optimum value such that $f(\eps)- \Delta+1 \leq \sum_{s \in \eps} m_\eps(s) x(s)$ for each $\eps \in\Eps$.
\end{corollary}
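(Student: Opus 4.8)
The plan is to reduce the {\sc Lower Bounded Degree Polymatroid Basis Problem with Multiplicities} to the {\sc Lower Bounded Degree g-polymatroid Element Problem with Multiplicities} and then invoke Theorem~\ref{thm:matroid2} as a black box. As recalled in Sect.~\ref{sec:bp}, a base polymatroid is a g-polymatroid: given the polymatroid function $b$ (with $b(S)$ finite, which is part of the definition of $B(b)$), define its complementary set function $p(X) := b(S) - b(S-X)$ for $X \subseteq S$. Then $(p,b)$ is a paramodular pair and $B(b) = Q(p,b)$, so the input $(B(b),c,H,f,m)$ is literally an instance of the g-polymatroid problem with border pair $(p,b)$, the same hypergraph $H$, the same lower bounds $f$, and the same multiplicity vectors $m_\eps$ (the corollary's hypothesis $m_\eps \geq 1$ being a special case of the $m_\eps \geq 0$ allowed by the theorem). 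In particular $\Delta = \max_{s\in S}\sum_{\eps\in\Eps:\,s\in\eps} m_\eps(s)$ is unchanged.

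First I would check that the reduction is algorithmically harmless: given a value oracle for $b$, one evaluates $p(X)$ with one oracle call for $b(S-X)$ and one for $b(S)$, so oracle access to the paramodular pair $(p,b)$ is available in polynomial time — and that is all Algorithm~\ref{alg:matd} and the submodular-minimization separation of \eqref{eq:lp_poly} require. Then I would simply apply Theorem~\ref{thm:matroid2} to this instance: it returns, in polynomial time, an element $x$ of $Q(p,b) = B(b)$ of cost at most the LP optimum (hence at most the integer optimum) such that $f(\eps) - \Delta + 1 \leq \sum_{s\in\eps} m_\eps(s) x(s)$ for every $\eps \in \Eps$, which is exactly the assertion of the corollary.

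I do not expect a genuine obstacle here; the content is entirely contained in Theorem~\ref{thm:matroid2} together with the identity $B(b) = Q(p,b)$. The only points that merit a sentence of care are that $b(S)$ must be finite for $p$ to be well-defined (automatic for a base polymatroid) and that the oracle translation above is indeed polynomial. If one preferred to avoid introducing $p$ explicitly, an alternative is to note that the proof of Theorem~\ref{thm:matroid2} uses only the supermodularity/submodularity of the border functions and the cross-inequality, all of which $(p,b)$ satisfies; but the cleanest presentation is the black-box reduction just described.
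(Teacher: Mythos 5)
Your reduction is exactly the paper's argument: the corollary is stated there as an immediate consequence of Theorem~\ref{thm:matroid2} via the identity $B(b)=Q(p,b)$ with $p(X)=b(S)-b(S-X)$, which is recalled in Sect.~\ref{sec:bp}. Your additional remarks on oracle access and the finiteness of $b(S)$ are correct but routine; there is no gap.
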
  

\clearpage
\pagebreak

\section{A \texorpdfstring{$1.5$}{1.5}-Approximation for the Metric Many-Visits TSP}
\label{sec:approx}
In this section we design a polynomial-time $1.5$-approximation for the {\sc Metric Many-Visits TSP}.
Our approach is along similar lines as Christofides' algorithm~\cite{Christofides1976} for the metric single-visit TSP.
It constructs a solution in three steps: (i) it computes a minimum cost spanning tree that ensures the connectivity of the solution, then (ii) it adds a minimum cost matching  on the set of vertices of odd degree in order to obtain an Eulerian subgraph, and finally (iii) it forms a Hamiltonian circuit from an Eulerian circuit by shortcutting repeated vertices.
 
In our setting of many-visits, we make use of the following formulation of the metric {\sc Many-Visits TSP}:
Given a complete undirected graph $G$ with non-negative cost function $c:E\rightarrow \mathbb{Z}_{\geq 0}$ and requirements $r:V\rightarrow\mathbb{Z}_{\geq 1}$, find a vector $x\in\mathbb{Z}^{E}_{\geq 0}$ minimizing $c^Tx$ such that $d_x(v)=2r(v)$ for every $v\in V$, and $\supp(x)$ is connected.
From now on we use $\hat{r}=r(V)-|V|+1$.

The high-level idea of the algorithm is the following.
We first show that the set of integral vectors $\{x\in\mathbb{Z}^{E}_{\geq 0}:x(E)=r(V), ~\supp(x)\ \text{is connected}\}$ form the integral points of a base polymatroid.
We apply Corollary~\ref{thm:polym} to this base polymatroid to obtain a vector $x\in\mathbb{Z}^{E}_{\geq 0}$ with $c^Tx$ no more than the optimum, such that $d_x(v)\geq 2r(v)-1$ for $v\in V$.
Then we add a minimum-cost matching on the set of vertices of odd $d_x(v)$-value.
Finally, by shortcutting vertices with degree higher than prescribed, we obtain a tour that satisfies the requirements on the number of visits at every vertex.
\begin{lemma}
\label{lem:bdef}
  Let $b$ denote the following function defined on edge sets $F\subseteq E$:
  \begin{equation}
  \label{eq:b}
    b(F) = \begin{cases}
             |V(F)|-\comp(F)+\hat{r} & \text{if $F\neq\emptyset$,}\\
              0 & \text{otherwise.}
           \end{cases}
  \end{equation} 
  Then $b$ is a polymatroid function.
\end{lemma}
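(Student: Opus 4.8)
The plan is to recognize that the function $F\mapsto|V(F)|-\comp(F)$ is the rank function of the graphic (cycle) matroid of $G$, and that $b$ is obtained from it merely by adding the constant $\hat r$ on every nonempty set. First I would recall that for any $F\subseteq E$ the quantity $|V(F)|-\comp(F)$ equals the number of edges in a spanning forest of $(V(F),F)$, which is precisely the rank $\rho(F)$ of $F$ in the cycle matroid of $G$ (equivalently $\rho(F)=|V|-c(V,F)$, where $c(V,F)$ counts the connected components of $(V,F)$ with isolated vertices included, the two expressions differing by the $|V|-|V(F)|$ isolated vertices). In particular $\rho$ is a matroid rank function, hence $\rho(\emptyset)=0$ and $\rho$ is integer-valued, monotone non-decreasing, and submodular.

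Next I would observe that $\hat r=r(V)-|V|+1\ge 1$, since $r(v)\ge 1$ for every $v\in V$ forces $r(V)\ge|V|$. Thus $b(F)=\rho(F)+\hat r$ for $F\neq\emptyset$ while $b(\emptyset)=0=\rho(\emptyset)$; that is, $b$ agrees with $\rho$ except for adding the nonnegative constant $\hat r$ to the value of each nonempty set. From this, $b(\emptyset)=0$ is immediate, and monotonicity follows at once: for $\emptyset\neq F\subseteq F'$ we get $b(F)=\rho(F)+\hat r\le\rho(F')+\hat r=b(F')$, and $b(\emptyset)=0\le\rho(F)+\hat r=b(F)$ for any nonempty $F$ because $\rho(F)\ge 0$ and $\hat r\ge 0$.

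The only step needing a genuine (though short) argument is submodularity, $b(X)+b(Y)\ge b(X\cap Y)+b(X\cup Y)$ for all $X,Y\subseteq E$, which I would verify by a case analysis on which of the sets are empty. If $X$ or $Y$ is empty, both sides equal $b$ of the other, so equality holds. If $X,Y\neq\emptyset$ and $X\cap Y\neq\emptyset$, then all four sets are nonempty and, after cancelling $2\hat r$ from both sides, the inequality is exactly the submodularity of $\rho$. The one new case is $X,Y\neq\emptyset$ with $X\cap Y=\emptyset$: here the left side is $\rho(X)+\rho(Y)+2\hat r$ and the right side is $b(\emptyset)+b(X\cup Y)=\rho(X\cup Y)+\hat r$, so it suffices to show $\rho(X)+\rho(Y)\ge\rho(X\cup Y)$, i.e. subadditivity of $\rho$, which follows from submodularity together with $\rho(\emptyset)=0$; the extra $\hat r\ge 0$ on the left only helps. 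This exhausts the cases, so $b$ is submodular, and hence $b$ is a polymatroid function. The main (minor) subtlety to watch is precisely this degenerate case $X\cap Y=\emptyset$, where the $\hat r$-terms do not cancel and one must invoke subadditivity of the matroid rank rather than submodularity directly.
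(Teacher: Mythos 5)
Your proof is correct and follows essentially the same route as the paper: reduce everything to the rank function $\rho(F)=|V(F)|-\comp(F)$ of the graphic matroid and observe that $b$ adds the constant $\hat r$ on nonempty sets. In fact you are more careful than the paper's own proof, which asserts that for nonempty $X,Y$ submodularity "follows from" that of $\rho$ without addressing the case $X\cap Y=\emptyset$, where the $\hat r$-terms do not cancel and one must (as you do) fall back on subadditivity of $\rho$ together with $\hat r\ge 0$.
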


\begin{proof}
  By definition, $b(\emptyset)=0$ and $b$ is monotone increasing.
  It remains to show that $b$ is submodular.
  Let $X,Y\subseteq E$.
  The submodular inequality clearly holds if one of $X$ and $Y$ is empty. If none of $X$~and~$Y$ is empty then the submodular inequality follows from the fact that $|V(F)|-\comp(F)$ is the rank function of the graphical matroid. 
\end{proof}

Consider the base polymatroid $B(b)$ determined by the border function defined in \eqref{eq:b}.
Let us define the set $B=\{x\in\mathbb{Z}^{E}_{\geq 0}:x(E)=r(V), ~ \supp(x)\ \text{is connected}\}$.
\begin{lemma}
\label{lem:description}
  $B=B(b)\cap\mathbb{Z}^{E}_{\geq 0}$.
\end{lemma}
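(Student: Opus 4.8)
The plan is to prove the two inclusions $B \subseteq B(b) \cap \mathbb{Z}^E_{\geq 0}$ and $B(b) \cap \mathbb{Z}^E_{\geq 0} \subseteq B$ separately, using the explicit form of $b$ from \eqref{eq:b} together with the fact established in Lemma~\ref{lem:bdef} that $b$ is a polymatroid function (so $B(b)$ is genuinely a base polymatroid and $B(b) \cap \mathbb{Z}^E_{\geq 0}$ is its set of integral elements). Throughout I would repeatedly use that for $F \neq \emptyset$, $\rho(F) := |V(F)| - \comp(F)$ is the rank function of the graphic matroid of $G$, so $\rho(F) \leq |V(F)| - 1 \leq |V| - 1 = n-1$, with $\rho(F) = n-1$ precisely when $(V(F), F)$ is connected and spans all of $V$.

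For the inclusion $B \subseteq B(b) \cap \mathbb{Z}^E_{\geq 0}$: take $x \in B$, so $x \in \mathbb{Z}^E_{\geq 0}$, $x(E) = r(V)$, and $\supp(x)$ is connected (spanning all of $V$, since every vertex has request $r(v) \geq 1$, hence $d_x(v) = 2r(v) > 0$). First I would check $x(E) = b(E)$: since $E$ is the full edge set of the complete graph $G$, $V(E) = V$ and $\comp(E) = 1$, so $b(E) = |V| - 1 + \hat r = |V| - 1 + r(V) - |V| + 1 = r(V) = x(E)$. Next, for an arbitrary $F \subseteq E$ I must show $x(F) \leq b(F)$. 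If $F = \emptyset$ this is $0 \leq 0$. If $F \neq \emptyset$, write $x(F) = x(F \cap \supp(x)) \leq x(\supp(x) \cap \text{(edges within }V(F)))$; more carefully, decompose: the edges in $F$ carrying positive $x$-weight lie in $\supp(x)$, and I would bound $x(F)$ by the total $x$-weight on $\supp(x)$ restricted appropriately. The cleanest route: let $F' = F \cap \supp(x)$; then $x(F) = x(F')$, and the graph $(V(F'), F')$ has $|V(F')| \leq |V(F)|$ vertices and $\comp(F') \geq$ (number of components of $(V(F), F)$ meeting $\supp(x)$), but I actually want to compare to $\comp(F)$. Since adding zero-weight edges only merges components, $\comp(F') \geq \comp(F)$ only fails... — here I must be careful. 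The right bound is: $x(F) = x(F')$ where $F' \subseteq \supp(x)$, and since $\supp(x)$ is connected and $x(E) = r(V)$, the total weight inside any vertex subset $W$ is at most $r(V) - (\text{weight needed to connect outside})$. Rather than this, I would argue directly via the degree sequence: $2 x(E(W)) + x(\delta(W) \cap \text{one side}) \leq \sum_{v \in W} d_x(v) = 2 \sum_{v \in W} r(v)$, combined with a spanning-tree counting argument showing $x(F) \le |V(F)| - \comp(F) + \hat r$. The honest main obstacle is exactly this: proving $x(F) \le b(F)$ requires showing that the ``excess'' weight $x(F) - (|V(F)| - \comp(F))$ over the graphic-matroid rank of $F$ is at most $\hat r = r(V) - |V| + 1$, which is the total surplus $\sum_v (r(v) - 1) + 1$ available beyond a spanning tree; this should follow by a careful accounting over the connected components of $(V(F), F)$ and the fact that $\supp(x)$ is a connected graph on all of $V$ with $x(E) = r(V)$.

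For the reverse inclusion $B(b) \cap \mathbb{Z}^E_{\geq 0} \subseteq B$: take an integral $x \in B(b)$, so $x \geq 0$, $x(F) \leq b(F)$ for all $F$, and $x(E) = b(E) = r(V)$. I need $\supp(x)$ connected. Suppose not; let $F = \supp(x)$. If $F = \emptyset$ then $x(E) = 0 \neq r(V)$ (using $r(v) \geq 1$), contradiction, so $F \neq \emptyset$. Then $x(F) = x(E) = r(V)$, while $b(F) = |V(F)| - \comp(F) + \hat r$. The constraint $x(F) \leq b(F)$ gives $r(V) \leq |V(F)| - \comp(F) + r(V) - |V| + 1$, i.e. $|V| - 1 \leq |V(F)| - \comp(F)$. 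Since $|V(F)| \leq |V|$ and $\comp(F) \geq 1$, this forces $|V(F)| = |V|$ and $\comp(F) = 1$, i.e. $\supp(x)$ is connected and spanning — contradiction with the assumption. (One also double-checks $\supp(x)$ must meet every vertex: if some $v \notin V(F)$ then $|V(F)| \le |V| - 1$, handled by the same inequality.) This direction is short and I expect no difficulty there.

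In summary, after disposing of the easy direction via the displayed inequality chain, the work concentrates on showing $x(F) \le b(F)$ for every $F$ when $x \in B$; I would handle it by summing, over the connected components $C_1, \dots, C_q$ of $(V(F), F)$, the inequalities $x(E(C_i)) \le \sum_{v \in V(C_i)} r(v) - 1$ (valid because $\supp(x)$ restricted to $V(C_i)$ uses at least $|V(C_i)| - 1$ units of weight just to be connected within, but more to the point follows from $x$ being supported on a connected spanning graph with total weight $r(V)$), and then adding the global slack $\hat r$; making this precise, rather than the algebra, is the crux.
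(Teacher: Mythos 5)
Your second direction ($B(b)\cap\mathbb{Z}^E_{\geq 0}\subseteq B$) is correct, and in fact slightly slicker than the paper's cut-based argument: applying the constraint to $F=\supp(x)$ and extracting $|V|-1\le |V(F)|-\comp(F)$ forces the support to be connected and spanning in one stroke. The problem is the direction you yourself flag as the crux, $B\subseteq B(b)$, where your sketch rests on a false inequality. You propose to sum $x(E(V(C_i)))\le r(V(C_i))-1$ over the components $C_i$ of $(V(F),F)$, justified via $\sum_{v\in W}d_x(v)=2r(W)$. But the set $B$ is defined only by $x\ge 0$, $x(E)=r(V)$ and connectivity of $\supp(x)$; the degree constraints $d_x(v)=2r(v)$ are \emph{not} part of the definition of $B$ --- they are imposed separately, through the hyperedges $\delta(v)$, in Algorithm~\ref{alg:apx_matd}. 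A concrete counterexample to your inequality: $V=\{u,v,w\}$, $r\equiv 1$, $x(uv)=2$, $x(vw)=1$, all other coordinates $0$. Then $x\in B$, but for $W=\{u,v\}$ one has $x(E(W))=2>r(W)-1=1$. (The lemma itself is unharmed here: $b(\{uv\})=2-1+\hat r=2\ge x(uv)$.) So the accounting you defer to cannot be carried out as described.

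The missing idea is to bound the complement rather than $F$ itself. Since $(V,\supp(x))$ is connected, contracting each of the $\comp(F)$ components of $(V(F),F)$ to a point and keeping the $|V|-|V(F)|$ uncovered vertices, one still needs a connected graph on $\comp(F)+|V|-|V(F)|$ nodes; hence $\supp(x)$ must contain at least $\comp(F)+|V|-|V(F)|-1$ edges of $E-F$, each carrying $x$-value at least $1$. This gives $x(E-F)\ge\comp(F)+|V|-|V(F)|-1$, and therefore $x(F)=x(E)-x(E-F)\le r(V)-|V|+|V(F)|-\comp(F)+1=b(F)$. This uses only connectivity and $x(E)=r(V)$, exactly the hypotheses defining $B$, and is the entire content of the paper's proof of this direction.
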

\begin{proof}
  Take an integral element $x\in B(b)$ and let $C\subseteq E$ be an arbitrary cut between $V_1$ and $V_2$ for some partition $V_1\uplus V_2$ of $V$.
  Then
  \begin{align*}
    x(C) {}&{}= x(E)-(x(E(V_1)\cup E(V_2)))\\
         {}&{}\geq |V|-1+\hat{r}-(|V_1|+|V_2|-\comp(E(V_1)\cup E(V_2))+\hat{r})\\
         {}&{}\geq 1,
  \end{align*}
  thus $\supp(x)$ is connected.
  As $x(E)=|V|-1+\hat{r}=r(V)$, we obtain $x\in B$, showing that $B(b)\subseteq B$.

  To see the other direction, take an element $x\in B$.
  As $\supp(x)$ is connected, $x(E - F)\geq\comp(F)+|V|-|V(F)|-1$ for every $F\subseteq E$.
  That is, 
  \begin{align*}
    x(F) {}&{}   = x(E) - x(E - F)\\
         {}&{}\leq r(V)-(|V-V(F)| +\comp(F)-1)\\
         {}&{}   = |V(F)|-\comp(F)+\hat{r},
  \end{align*}
  thus $x(F)\leq b(F)$.
  As $x(E) = r(V) = |V|-1+\hat{r}$, we obtain $x\in B(b)$, showing $B\subseteq B(b)$. 
\end{proof}

\begin{algorithm}[h!]
  \caption{A 1.5-approximation algorithm for the metric {\sc Many-Visits TSP}\label{alg:apx_matd}}
  \begin{algorithmic}[1]
    \Statex \textbf{Input:} A complete undirected graph $G$, costs $c:E\rightarrow\mathbb{R}_{\geq 0}$ satisfying the triangle inequality, requirements $r:V\rightarrow\mathbb{Z}_{\geq 1}$.
    \Statex \textbf{Output:} A tour that visits each $v \in V$ exactly $r(v)$ times.
    \State Construct the polymatroid $B(b)=(S,b)$, where $S:=E$ and $b$ is defined as in Equation~\eqref{eq:b}.\label{st:1} 
    \State  Construct a hypergraph $H=(S,\Eps)$ with $\Eps=\{\delta(v):v\in V\}$ and \label{st:2} 

      $\circ$ for every $\eps \in \Eps$ and $s\in\eps$, set $m_\eps(s)=2$ if $s$ is a self-loop and $m_\eps(s)=1$ otherwise, 

      $\circ$ for every $\eps \in \Eps$, set $f(\eps)=2 \cdot r(v)$, where $\eps=\delta(v)$. 

    \State Run Algorithm~\ref{alg:matd} with $B(b),c,H,f$ and the $m_\eps$'s as input.
      Let $z\in B(b)$ denote the output.\label{st:3}
    \State Calculate a minimum-cost matching $M$ with respect to $c$ on the vertices of $V$ with odd $d_z(v)$ values.\label{st:4}
    \State Determine a tour $T = \{C, \mu_C\}_{C \in \mathcal{C}}$ from $z$ and $\chi_M$.\label{st:5}
    \State Do shortcuts in $T$ and obtain a solution $T'$, such that $T'$ visits every city $v$ exactly $r(v)$ times (that is, $d_T(v) = r(v)$ for every vertex $v \in V$). \label{st:6}
    \State \textbf{return} $T'$.
  \end{algorithmic}
\end{algorithm} 

Our algorithm is presented as Algorithm~\ref{alg:apx_matd}. First, we construct a polymatroid $B$ and a hypergraph~$H$, such that their common ground set $S$ consists of the edges of the graph $G$ in our \textsc{Many-Visits TSP} instance. The border function $b$ of the polymatroid is defined in Equation~\eqref{eq:b}.

For each vertex $v$ of $G$, there is a hyperedge $\eps$ in the hypergraph that contains all edges of~$G$ incident to $v$, including the self-loop at $v$.
We set the multiplicities of an element $s \in S$ to~2 if it corresponds to a self-loop in $G$, and to 1 otherwise.
The motivation is that a self-loop contributes the the degree of a vertex by two, while a regular edge contributes to the degree of each of its endpoints by one.
Note that an element $s \in S$ is contained in exactly one hyperedge if it corresponds to a self-loop, and it is contained in exactly two hyperedges otherwise; therefore the total contribution of each edge adds up to two.

Now we are ready to prove Theorem~\ref{thm:tsp1}.
\begin{proof}[Proof of Theorem~\ref{thm:tsp1}]
  First, let us show that Algorithm~\ref{alg:apx_matd} provides a feasible solution for the given instance $(G,c,r)$ of the metric {\sc Many-Visits TSP}.

  By Lemma~\ref{lem:description}, the solution $z$ provided by Algorithm~\ref{alg:matd} in Step~\ref{st:2} is such that $c^T z\leq c(\TR^\star_{c,r})$, $z(E)=r(V)$ and $\supp(z)$ is connected.
  Note that in our case $\Delta=\max_{s\in S}\{\sum_{\eps \in \Eps:s\in \eps}m_\eps(s)\}\linebreak=2$.
  By Theorem~\ref{thm:matroid2}, $f(\eps)-1\leq\sum_{s \in \eps}m_\eps(s)z(s)$ for each $\eps \in \Eps$, and this inequality translates to $2\cdot r(v)-1\leq d_z(v)$ for every $v\in V$.
  That is, $z$ corresponds to a multigraph of cost at most $c(\TR^\star_{c,r})$ violating the degree prescriptions from below by at most one.
  Note that this means the total violation from above is at most $|V|-1$.
  In Step~\ref{st:3} we calculate a matching~$M$ that provides one extra degree to each odd-degree vertex.
  That is, in the union of the multigraph defined by~$z$ and $M$, every vertex~$v$ has an even degree.
  
  \paragraph{\textbf{Constructing a tour and shortcutting}}
  In Step~\ref{st:4}, we construct a \emph{compact representation} of a tour from the vector $z$ and matching $M$, and we denote it by~$T$.
  We use the algorithm described in Grigoriev~et~al.~\cite{Grigoriev2006}, which takes the edge multiplicities as input, and outputs a collection $\mathcal{C}$ of pairs~$(C, \mu_C)$.
  Here $C$ is a simple closed walk, and $\mu_C$ is the corresponding integer denoting the number of copies of the walk $C$ in $T$.
  From the pairs $(C, \mu_C)$ it is possible to construct an implicit order of the vertices the following way.
  
  Let us construct an auxiliary multigraph $A$ on the vertex set $V$ by taking the edges of each cycle~$C$ exactly once.
  Note that parallel edges are allowed in~$A$ if an edge appears in multiple cycles~$C$.
  Due to the construction, every vertex has an even degree in $A$, which means that there exist an Eulerian circuit in $A$.
  Moreover, there are $\mathcal O(n^2)$ distinct cycles~\cite{Grigoriev2006}, hence, the total number of edges in $A$ is $\mathcal O(n^3)$.
  Consequently, using Hierholzer's algorithm, we can compute an Eulerian circuit $\eta$ in $A$ in $\mathcal O(n^3)$ time~\cite{Hierholzer1873,Fleischer1991}.
  The circuit $\eta$ covers the edges of each cycle $C$ once.
  Now an implicit order of the vertices in the {\sc Many-Visits TSP} tour $T$ is the following.
  Traverse the vertices of the Eulerian circuit $\eta$ in order.
  Every time a vertex $u$ appears the first time, traverse all cycles~$C$ that contain the vertex $\mu_{C}-1$ times.
  Denote this circuit by $\eta'$.
  It is easy to see that the sequence~$\eta'$ is a sequence of vertices that uses the edges of each cycle $C$ exactly $\mu_C$ times, meaning this is a feasible sequence of the vertices in the tour~$T$.
  Moreover, the order itself takes polynomial space, as it is enough to store indices of $\mathcal O(n^3)$ vertices and $\mathcal O(n^2)$ cycles.
  
  Now let us consider the set $W$ of vertices $w$ that have more visits than $r(w)$ in the tour~$T$.
  Denote the surplus of visits of a vertex $w\in W$ by $\gamma(w) := d_T(w) - 2\cdot r(w)$.
  In Step~\ref{st:6}, we remove the last~$\gamma(w)$ occurrences of every vertex $w \in W$ from~$T$, by doing shortcuts: if an occurrence of~$w$ is preceded by~$u$ and superseded by~$v$ in $T$, replace the edges $uw$ and $wv$ by $uv$ in the sequence.
  This can be done by traversing the compact representation of $\eta'$ backwards, and removing the vertex $w$ from the last $\gamma(w)$ cycles $C^{(w)}_{r(w)-\gamma(w)+1}, \dots, C^{(w)}_{r(w)}$.
  As $\sum_w \gamma(w)$ can be bounded by~$\mathcal O(n)$, this operation makes $\mathcal O(n)$ new cycles, keeping the space required by the new sequence of vertices and cycles polynomial.
  Moreover, since the edge costs are metric, making shortcuts the way described above cannot increase the total cost of the edges in $T$.
  Finally, using a similar argument as in the algorithm of Christofides, the shortcutting does not make the tour disconnected.
  The resulting graph is therefore a tour $T'$ that visits every vertex~$v$ exactly $r(v)$ times, that is, a feasible solution for the instance $(G,c,r)$.
  
  \paragraph{\textbf{Cost and complexity}}
  The cost of the edges in $z$ is at most $c(\TR^\star_{c,r})$, and as the cost function satisfies the triangle inequality, the cost of the matching $M$ found in Step~\ref{st:3} is at most $c(\TR^\star_{c,1})/2$.
  Moreover, taking shortcuts at vertices does not increase the cost of the solution, hence the cost of the output is at most $c^T(z+\chi_M)\leq c(\TR^\star_{c,r})+c(\TR^\star_{c,1})/2=1.5\cdot c(\TR^\star_{c,r})$.

  Now we turn to the complexity analysis.
  All edge multiplicities during the algorithm are stored as integer numbers in binary, therefore the space needed of any variable representing multiplicities of edges can be bounded by $\mathcal O(n^2 \, \log\sum r(v))$.
  Steps~\ref{st:1}-\ref{st:2} can be performed in time that is polynomial in the input size.
  The function $b$ is defined in Lemma~\ref{lem:bdef} and can be computed efficiently.
  Therefore, according to Corollary~\ref{thm:polym}, the algorithm in Step~\ref{st:3} also runs in polynomial time.
  Step~\ref{st:4} can also be done in polynomial time~\cite{Grigoriev2006}, and the number of closed walks can be bounded by $\mathcal O(n^2)$.
  Moreover, the total surplus of degrees in $T$ is at most $n-1$, therefore the number of shortcutting operations is also bounded by $n$.
  This completes the proof.
\end{proof}

It is worth considering what Algorithm~\ref{alg:apx_matd} does when applied to the single-visit TSP, that is, when $r(v)=1$ for each $v\in V$.
The output of Algorithm~\ref{alg:matd} in Step~\ref{st:3} is a connected multigraph with $r(V)=n$ edges.
Note that the guarantee that each vertex $v$ has degree at least $2\cdot r(v)-1=1$ does not add anything as this already follows from connectivity.
Such a graph is basically the union of a spanning tree and a single edge (where the edge might be also part of the spanning tree, that is, in the solution having multiplicity 2); we call such a graph a \emph{1-tree}.
The rest of the algorithm mimics Christofides' algorithm: a minimum cost matching is added on the set of vertices of odd degree to get an Eulerian graph, and then a Hamiltonian circuit is formed by shortcutting repeated vertices in an Eulerian circuit.
That is, applying our algorithm to a single-visit TSP instance, it is almost identical to that of Christofides, except that instead of a spanning tree we start with a 1-tree.
However, the 1-tree we start with is not necessarily a cheapest one among all possible choices; we only know that its cost is at most the cost of the optimal single-visit TSP tour.

\section{Discussion}
\label{sec:discussion}
In this work we developed an approximation algorithm for the minimum-cost degree bounded g-polymatroid element problem with multiplicities.
The approximation algorithm yields a solution of cost at most the optimum, which violates the lower bounds only by a constant factor depending on the weighted maximum element frequency $\Delta$.
We then demonstrated the usefulness of our result by developing a polynomial-time $1.5$-approxi\-ma\-tion algorithm for the metric many-visits traveling salesman problem.
This way, we match the famous Christofides-Serdyukov bound for the single-visit TSP.

\bigskip
\noindent
{\small
\textbf{Acknowledgements.}
The authors are grateful to Rico Zenklusen for discussions on techniques to obtain a $1.5$-approximation for the metric version of the {\sc Many-Visits TSP}, and to Tam\'as Kir\'aly and Gyula Pap for their suggestions.
Krist\'of was supported by the J\'anos Bolyai Research Fellowship of the Hungarian Academy of Sciences and by the {\'U}NKP-19-4 New National Excellence Program of the Ministry for Innovation and Technology. Project no. NKFI-128673 has been implemented with the support provided from the National Research, Development and Innovation Fund of Hungary, financed under the FK\_18 funding scheme.
This research was supported by Thematic Excellence Programme, Industry and Digitization Subprogramme, NRDI Office, 2019.}

\bibliographystyle{abbrv}
\bibliography{mvtsp_apx}

\end{document}